%% file: main.tex
\PassOptionsToPackage{dvipsnames}{xcolor}
\documentclass[acmsmall,nonacm]{acmart}

\usepackage{listings}
\usepackage{mathpartir}
\usepackage{subcaption}
\usepackage{wasysym}
\usepackage{tikzit}
\usepackage{graphicx}
\input{figures/sample.tikzstyles}
\input{figures/sample.tikzdefs}
\usepackage{pgf}
\usepackage{thmtools}
\usepackage{siunitx}
\usepackage[utf8]{inputenc}

\newcommand{\commentNote}[1]{}
\newcommand{\Lang}{\textsc{DeCo}}

\newcommand{\den}[1]{\llbracket #1 \rrbracket}

\newcommand{\pos}[1]{\llbracket #1 \rrbracket_\mathcal{I}}
\newcommand{\deriv}[1]{{#1}_\mathsf{d}}
\newcommand{\ctype}[1]{{#1}_\mathsf{C}}
\newcommand{\init}[1]{{#1}_\mathsf{i}}

\newcommand{\src}[1]{{\color{RoyalBlue}{\mathsf{#1}}}} %
\newcommand{\ext}[1]{{\color{RedOrange}{\mathbf{#1}}}} %
\newcommand{\der}[1]{{\color{ForestGreen}{\mathit{#1}}}} %

\newcommand{\pair}[2]{\src{\langle} #1 \src{,}~ #2 \src{\rangle}}

\usepackage{color}
\definecolor{keywordcolor}{rgb}{0.7, 0.1, 0.1}   %
\definecolor{tacticcolor}{rgb}{0.0, 0.1, 0.6}    %
\definecolor{commentcolor}{rgb}{0.4, 0.4, 0.4}   %
\definecolor{symbolcolor}{rgb}{0.0, 0.1, 0.6}    %
\definecolor{sortcolor}{rgb}{0.1, 0.5, 0.1}      %
\definecolor{attributecolor}{rgb}{0.7, 0.1, 0.1} %

\usepackage{listings}

\newcommand{\sound}{value preserving}
\newcommand{\soundness}{value preservation}
\newcommand{\Sound}{Value Preserving}
\newcommand{\Soundness}{Value Preservation}
\newcommand{\thmIncrSoundName}{\Sound{} Incrementalization}
\newcommand{\thmCombSoundName}{\Sound{} Combinators}

\setcopyright{rightsretained}
\acmDOI{10.1145/3798264}
\acmYear{2026}
\acmJournal{PACMPL}
\acmVolume{10}
\acmNumber{OOPSLA1}
\acmArticle{156}
\acmMonth{4}
\received{2025-10-10}
\received[accepted]{2026-02-17}

\begin{document}

\title{DeCo: A Core Calculus for Incremental Functional Programming with Generic Data Types}

\author{Timon Böhler}
\orcid{0009-0002-9964-7367}
\affiliation{%
  \institution{TU Darmstadt}
  \city{Darmstadt}
  \country{Germany}
}
\email{timon.boehler@tu-darmstadt.de}

\author{Tobias Reinhard}
\orcid{0000-0003-1048-8735}
\affiliation{%
  \institution{TU Darmstadt}
  \city{Darmstadt}
  \country{Germany}
}
\email{tobias.reinhard1@tu-darmstadt.de}

\author{David Richter}
\orcid{0000-0002-8672-0265}
\affiliation{%
  \institution{TU Darmstadt}
  \city{Darmstadt}
  \country{Germany}
}
\email{david.richter@tu-darmstadt.de}

\author{Mira Mezini}
\orcid{0000-0001-6563-7537}
\affiliation{%
  \institution{TU Darmstadt}
  \city{Darmstadt}
  \country{Germany}
}
\affiliation{%
  \institution{hessian.AI}
  \city{Darmstadt}
  \country{Germany}
}
\affiliation{%
  \institution{National Research Center for Applied Cybersecurity ATHENE}
  \city{Darmstadt}
  \country{Germany}
}
\email{mezini@informatik.tu-darmstadt.de}

\begin{abstract}
\input{abstract}
\end{abstract}

\received{20 February 2007}
\received[revised]{12 March 2009}
\received[accepted]{5 June 2009}

\maketitle

\section{Introduction}

\input{introduction}

\input{byexample}

\section{Formalizing Incrementalization}
\label{sec:fw}

\input{framework}

\section{Soundness}
\label{sec:correct}

\input{correctness}

\section{Implementation}
\label{sec:impl}

\input{implementation}

\section{Applications}
\label{sec:app}

\input{application}

\section{Related Work}
\label{sec:rel}

\input{related}

\section{Conclusion}
\label{sec:conc}

\input{conclusion}

\section{Data-Availability Statement}
\input{data}

\section{Acknowledgments}

We thank the anonymous reviewers for their extensive and helpful feedback.

This research work was supported by the National Research Center for Applied Cybersecurity ATHENE. ATHENE is funded jointly by the German Federal Ministry of Education and Research and the Hessian Ministry of Higher Education, Research and the Arts.
This work was funded by the LOEWE initiative (Hesse, Germany) [LOEWE/4a//519/05/00.002(0013)/95], by the Deutsche Forschungsgemeinschaft (DFG, German Research Foundation) – SFB 1119 – 236615297, by the Hessian Ministry of Higher Education, Research, Science and the Arts within the cluster project The Third Wave of Artificial Intelligence (3AI), and by the Deutsche Forschungsgemeinschaft (DFG, German Research Foundation) under Germany’s Excellence Strategy (EXC-3057/1 ``Reasonable Artificial Intelligence'', Project No. 533677015).

\bibliographystyle{ACM-Reference-Format}
\bibliography{bib}

\newpage
\appendix

\section{Full Listings}
\label{sec:sums}
\input{sums}

\clearpage

\section{Evaluation Details}
\label{sec:evaldetails}
\input{evaldetails}

\end{document}

%% file: figures/sample.tikzstyles
\tikzstyle{red dot}=[fill=red, draw=black, shape=circle]
\tikzstyle{green dot}=[fill={rgb,255: red,0; green,113; blue,0}, draw=black, shape=circle]
\tikzstyle{medium box}=[fill=white, draw=black, shape=rectangle, minimum width=0.5cm, minimum height=0.5cm]
\tikzstyle{black dot}=[fill=black, draw=black, shape=circle, inner sep=0pt, outer sep=0pt, minimum size=0.15cm]

\tikzstyle{dashed edge}=[-, dashed, draw={rgb,255: red,255; green,128; blue,0}, fill=none]
\tikzstyle{big dashes}=[-, thick, dashed, dash pattern=on 4mm off 2mm, fill={rgb,255: red,128; green,0; blue,128}]
\tikzstyle{filled yellow}=[-, fill=yellow, draw=none]
\tikzstyle{filled orange}=[-, fill={rgb,255: red,255; green,128; blue,0}, draw=none]
\tikzstyle{arrowheadsarrowheadsarrowheads}=[-, ->, line width=1pt]
\tikzstyle{thickedge}=[-, line width=1pt]
\tikzstyle{bi arrow}=[-, <->]
\tikzstyle{teal line}=[-, draw={rgb,255: red,0; green,128; blue,128}]
\tikzstyle{red_line}=[-, draw={rgb,255: red,191; green,0; blue,64}]
\tikzstyle{red}=[-, fill=red, draw=none]

%% file: abstract.tex
Incrementalization speeds up computations by avoiding unnecessary recomputations and by efficiently reusing previous results.
While domain-specific techniques achieve impressive speedups, e.g., in the context of database queries, they are difficult to generalize.
Meanwhile, general approaches offer little support for incrementalizing domain-specific operations.

In this work, we present \Lang{}, a novel core calculus for incremental functional programming with support for a wide range of user-defined data types.
Despite its generic nature, our approach statically incrementalizes domain-specific operations on user-defined data types.
It is, hence, more fine-grained than other generic techniques which resort to treating domain-specific operations as black boxes.

We mechanized our work in Lean and proved it sound, meaning incrementalized execution computes the same result as full reevaluation.
We also provide an executable implementation with case studies featuring examples from linear algebra, relational algebra, dictionaries, trees, and conflict-free replicated data types,
plus a brief performance evaluation on linear and relational algebra and on trees.

%% file: introduction.tex
\newcommand{\dense}{\ensuremath{\mathsf{dense}}}

\commentNote{When an input $x$ changes by $\Delta x$, an incrementalized function leverages previously computed results to reduce computational overhead.
For example, consider a dense neural network layer function \dense{} defined as:
\[\dense{}~x := \mathsf{map}~\mathsf{relu}~(M \cdot x + b)\] 
where $x$ is the input vector, $M$ is the (square) weight matrix, and $b$ is the bias vector. 
The standard evaluation of \dense{} has $O(n^2)$ time complexity with respect to the input size. 
However, once we incrementalize it, only the initial computation will require $O(n^2)$ time for input vector $x$.
Any subsequent update will merely take $O(n)$ when modifying a single element of $x$.
 i.e., we achieve linear-time updates by efficiently reusing previous computations.}

Incrementalization~\cite{Liu00} speeds up computations by avoiding full recomputation and processing only input changes. 
We present \Lang{}, a typed functional first-order language for incremental programming that supports static, fine-grained incrementalization using a combination of user-defined data types and built-in sum and product types. 

\Lang{} addresses the limitations of and complements the state of the art as follows:
Unlike coarse-grained approaches such as selective reevaluation~\cite{Hammer14} that treat operations as black boxes, \Lang{} incrementalizes operations on user-defined data types while caching only relevant intermediate results.
Unlike fine-grained approaches targeting specific structures such as multidimensional arrays~\cite{Shaikhha20} or relations~\cite{Budiu23}, 
\Lang{} provides a unified foundation for incremental computation across different data types through \emph{containers}~\cite{Abbott05}. 
Data-type-agnostic approaches like I$\lambda$C~\cite{Cai14} provide little support when implementing new data types
as they have no generic built-in operations, while
\Lang{} offers data-type-generic operations such as $\src{map}$ and $\src{reshape}$ and incrementalization combinators
for deriving incrementalizations more easily.

\paragraph{Unified Treatment of Different Data Types}
To effectively increment operations on user-defined data structures, \Lang{} must exploit data-structure-specific properties, while also supporting a wide range of such structures.
Our calculus achieves this by representing type constructors as \emph{containers}~\cite{Abbott05} and requiring base types to form \emph{change structures}~\cite{Cai14}.
Containers offer a uniform representation of data structures as maps from indices to values, enabling simple in-place updates---in contrast to algebraic data types, which require recursively traversing unchanged parts of a structure.
Change structures provide a type of values, a type of changes, and operations applying changes to values.
Users can instantiate \Lang{} to their domain by defining (a)~a container for their data structure, (b)~a change structure for their base type, and (c)~domain-specific operations and their incrementalizations.
\Lang{} also comes with product and sum types built-in, automatically deriving change structures for arbitray combinations of base types,
sum, products, and containers.

\paragraph{Incrementalization Combinators}
Incrementalizing domain-specific operations often follows certain patterns and can be repetitive.
Yet, subtle mistakes may invalidate the entire computation.
\Lang{} reduces the user burden and strengthens user guarantees by providing a set of correct-by-construction incrementalization combinators adhering to these patterns.
They allow users to exploit a function's linearity or self-maintainability.
Most importantly, they guarantee soundness: 
Any domain-specific incrementalization built via our combinators yields the same result as the original computation.

\paragraph{Mechanized Soundness Proof}
We mechanize \Lang{} in the dependently-typed programming language Lean~\cite{Lean4}.
In particular, we mechanize our calculus' denotational semantics and incrementalization.
Most importantly, we mechanically prove that the incrementalization is sound with respect to the formalized denotational semantics.
That is, applying the incremental program to a sequence of changes yields the same result as recomputing the original program. Additionally, we prove that every incrementalization built with our incrementalization combinators is correct by construction, provided that each combinator's preconditions are met.

\paragraph{Implementation and Case Studies}
Building on the mechanization of our calculus, we implemented a generic incrementalization
framework, which we apply to incremental operations stemming from linear algebra, relational algebra, dictionaries, trees, and conflict-free replicated data types.
We thereby show how \Lang{}'s generic data transformations together with incrementalization combinators can be used to incrementalize operations from a variety of domains. 
Further, we incrementalize several programs, demonstrating the expected speed-up for incrementalization compared to full reevaluation:
A dense neural network layer, two relational queries involving join and aggregation, and a sum over rose trees.

\commentNote{
	\Lang{} consists of three parts:
	(i)~A minimalistic functional language, (ii)~a set of helper combinators for constructing incrementalizations of user-defined functions, and (iii)~an instantiation interface.
	(i)~The language is pure, first-order and statically-typed.
	It features sequential and parallel composition and operations on product types.
	It also includes generic data transformations and second-order functions defined for any container over change structures.
	To keep the core calculus and incrementalizations simple, we chose a point-free representation, inspired by Cartesian categories~\cite{Fox76}.
	(But as mentioned before, we can also view this calculus as an intermediate representation for compilers implementing incremental functional programming languages.
	Hence, we also implemented a pointful front-end syntax that gets translated to our point-free calculus.)
	(ii) The \textit{incrementalization combinators} simplify the definition of domain-specific applications.
	They reduce the user's implementation burden while ensuring that all composed incrementalizations are correct-by-construction.
	(iii)~The calculus' \textit{instantiation interface} describes the structures and operations that need to be defined 
	for applying  \Lang{} to specific domains.
	An application of \Lang{} to specific domain requires defining: 
	(a)~The domain data structure as a container,
	(b)~a base type as a change structure,
	(c)~domain-specific operations and their incrementalizations.
	\Lang{} can also be viewed as an intermediate representation incremental programming.
	We have implemented a front-end with standard program syntax featuring variables and let-bindings, which compiles to \Lang{}.
}

\paragraph{Contributions}
In summary, we present \Lang{}, a novel core calculus for incremental
functional programming with the following key properties:
(i) Support for built-in sum and product types alongside a wide range of \emph{user-provided} data structures.
(ii)~Support for facilitating the instantiation to new application domains through generic operations and combinators.
(iii)~Automatic incrementalization that is correct-by-construction.
Concretely, we make the following contributions.
\begin{itemize}
	\item \textbf{Core calculus:} 
	A minimal formal foundation for cached incremental functional programming with generic data transformations, which can be instantiated to different data types.
	\item \textbf{Incrementalization combinators:} 
	Reusable building blocks that simplify the definition of correct-by-construction incrementalizations for user-defined operations.
	\item \textbf{Mechanized soundness proof:} 
	Incrementalization preserves program semantics.
	The proof is fully mechanized in Lean.
	\item \textbf{Implementation:} 
	A library that implements our formalized core calculus.
	\item \textbf{Case studies:} We instantiate our library to implement
		incremental operations involving trees, linear and relational algebra, and an exemplary conflict-free replicated data type (CRDT).
\end{itemize}

\paragraph{Paper overview}

The remainder of this paper is organized as follows. 
First, in Section~\ref{sec:ci}, we describe cached incrementalization and the correctness properties it should fulfill.
In Section~\ref{sec:ex}, we explain the different parts of our approach
using a dense layer program as the main example. We then formally define
\Lang{}, its type system, denotational semantics, and incrementalization,
as well as incrementalization combinators in Section~\ref{sec:fw}.
An overview of the mechanization and verification of these concepts
in Lean is given in Section~\ref{sec:correct}. The library implementation
and its application to our case studies are described in
Section~\ref{sec:impl} and \ref{sec:app}, respectively.
\footnote{A preliminary version of the artifact is available at \url{https://figshare.com/s/abdad549da026b4dbca5}.}

%% file: byexample.tex
\section{Cached Incrementalization in a Nutshell}
\label{sec:ci}

In this section, we introduce incrementalization, its cached variant, and the notion of correct incrementalization.
Incrementalization aims to speed up repeated computations on similar data by reusing previously computed results.
Our incrementalization approach is compositional and transforms regular programs into incremental ones, by mapping operations to incremental ones. %

A naive approach to incrementalization could work as follows:
Consider types $A$ and $B$ and let $A'$ and $B'$ be the types of \textit{changes} on $A$ and $B$, respectively (in our setting, one can imagine $A'$ as containing sparse data structures whose non-zero entries describe how to update the entries of values in $A$). 
We write $x \oplus x'$ to denote the application of a change $x' \in A'$ to a value $x \in A$.
Then, an incrementalization of $f:A \to B$ is a function $f' : A' \to B'$ that maps input changes to output changes:
\begin{align}
f~(x \oplus x') = f~x \oplus f'~x'.
\end{align}
Note that throughout this work, we denote variables relating to changes using a superscript ${}^\prime$, as we did with $A', x', f'$ above.
An incrementalization $f'$ fulfilling the above law allows us to reuse the previously computed value $f~x$ during the evaluation of $f~(x \oplus x')$.
To compute the latter, we only need to add the difference $f'~x'$ to $f~x$.
Note that the above formula assumes that the output change solely depends on the input change, as $f'$ only receives $x'$, not $x$.
Often, however, we rely on more information to compute the change.
The most direct solution would be to also pass $x$ to~$f'$. 
However, then $f'$ may need to reevaluate parts of $f$ to determine how intermediate results have changed. 
To avoid this reevaluation, $f'$ should receive a data structure containing intermediate results.
This insight forms the core of the \textit{caching} approach to incrementalization~\cite{Giarrusso19}.

In general, \textit{cached incrementalization} remembers intermediate values by equipping each operation in the program with a cache.
Hence, in order to incrementalize any function $f : A \to B$, we need to define three components (Figure~\ref{fig:interface}):
(i)~A \textit{cache type} $C : \mathsf{Type}$ describing which values are cached. 
(ii)~A \textit{difference function} $d : A' \to C \to B' \times C$ that takes an input change and the previous cache, and returns an output change and a new cache. 
(iii)~An \textit{initializing function} $i : A \to B \times C$, which produces both the actual result of the function and the first cache.
Afterwards the difference function $d$ can be called on this initial cache. 

Figure~\ref{fig:intuit} visualizes this approach.
Consider a value $a$, repeatedly modified via changes~$a_1', a_2', \dots$, and a function $f$ taking $a$ as input.
A non-incremental approach (Figure~\ref{fig:intuit:fullReeval}) has to reevaluate $f$ each time $a$ changes.
In contrast, the incremental approach (Figure~\ref{fig:intuit:cachedIncremental}) only fully evaluates $f$ exactly once:
In the very first step, $i$ is evaluated.
In every following step, we call the incremental function $d$ on a cache and on the input change $a'_i$ (instead of the full input $a_i$).
This allows us to avoid full reevaluation.
\begin{figure}
\begin{minipage}[b]{.33\textwidth}
\[\begin{array}{lll}
f &:& A \to B \\[1em]
C &:& \mathsf{Type} \\
i &:& A \to B \times C \\
d &:& A' \to C \to B' \times C \\
\end{array}\]

\vspace{2em}
\caption{Incrementalization interface.}
\label{fig:interface}
\end{minipage}%
\begin{minipage}[b]{.66\textwidth}
\begin{subfigure}[b]{.5\textwidth}
\centering
\ctikzfig{incr_intuition}
\caption{Full re-evaluation.}
\label{fig:intuit:fullReeval}
\end{subfigure}%
\begin{subfigure}[b]{.5\textwidth}
\centering
\ctikzfig{incr_intuition2}
\caption{Cached incremental evaluation.}
\label{fig:intuit:cachedIncremental}
\end{subfigure}
\caption{Overview.}
\label{fig:intuit}
\end{minipage}
\end{figure}

Note that $d$ can also be viewed as the transition function of a Mealy machine with input type~$A'$, output type~$B'$, and internal state~$C$. 
The fact that Mealy machines implement \textit{causal stream functions}~\cite{Ghica25} means we can also interpret incrementalizations as functions of type $\mathsf{Stream}~A' \to \mathsf{Stream}~B'$, i.e., as maps from streams of input changes to streams of output changes. 
This perspective connects the cache-based approach to incrementalization of \citet{Giarrusso19} with the stream-based approach of \citet{Budiu23}. 
We use the former, because the idea of updating a cache comes closer to actual implementations. 
In contrast, a causal stream function takes in the entire previous history of input changes to produce the current output change, which obviously is not usable for direct implementation. 
Further, the Mealy machine makes explicit which data is being stored in the cache.

Our cached incrementalizations have to comply with the following laws (also Definition~\ref{def:incrementalization}):
(We write $i_1~x$ for the first component of $i~x$, and analogously for
$i_2, d_1$, and $d_2$.)
\begin{align}
\forall x.    && i_1~x      &= f~x                    \tag{Law-1} &\label{align:law1}\\
\forall x~x'. && f~(x \oplus x')     &= f~x \oplus d_1~x'~(i_2~x)  \tag{Law-2} \label{align:law2}\\
\forall x~x'. && i_2~(x \oplus x') &= d_2~x'~(i_2~x). \tag{Law-3} \label{align:law3}
\end{align}

The intuition behind these laws is as follows:
(\ref{align:law1})~The initialization function's first component equals the original full evaluation.
(\ref{align:law2}) allows us to avoid full recomputation on a changed value~$f~(x \oplus x')$.
Given the previous step's cache, we can use the difference function $d$ to compute the change.
(\ref{align:law3})~Applying the difference function $d$ to a change $x'$ and the initial cache produced by $i$ (i.e. $i_2~x$) updates the cache consistently.
In particular, initializing the cache for an updated input $x \oplus x'$ yields the same result as updating the existing cache with difference function~$d$.

The laws are used for showing our key soundness property.
In our soundness proof in Section~\ref{sec:correct} we show that our incrementalizations compute the same results as full reevaluations. More precisely, given a program $f$, initial value $x$ and a list of changes $x'_{(.)}$, we obtain the same result from:
(i)~Applying changes $x'_{(.)}$ to~$x$ and calling $f$ on the new value (i.e., full reevaluation).
(ii)~Computing $i~x$ and applying $d$ to each change (i.e., incrementalization).
This proof ensures \Lang{} programs support correct-by-construction incrementalization, enabling its use as foundational calculus and intermediate~representation~(cf.~Section~\ref{sec:frontend}).

\ref{align:law1} -- \ref{align:law3} capture the properties ensuring this soundness. They essentially describe the invariant that after each change, the cache always needs to align with the current output we would get from reevaluation.
To illustrate the idea, consider an initial input $a$ and a sequence of input changes $a_1', \dots, a_n'$ transforming $a$ into $a_n$.
Applying $d$ sequentially to each change yields a corresponding sequence of output changes $b_1', \dots, b_n'$ (cf. Figure~\ref{fig:intuit}).
We then apply these changes consecutively to the initial output $f~a$, thereby obtaining the updated output $f~a_n$.

Our use of intrinsic static typing means we only have to consider well-typed and terminating programs, which leads to a simpler correctness proof 
compared to untyped approaches~\cite{Giarrusso19}.
The above laws follow from relatively straightforward induction proofs.

\section{\Lang{} Programs by Example}
\label{sec:ex}

In this section, we
(i)  discuss the features of our core calculus \Lang{},
(ii) illustrate the support for domain-specific instantiations by implementing one for linear algebra, and
(iii) discuss the requirements that cached incrementalization imposes on the instantiation's domain-specific operations.
Later in Section~\ref{sec:app}, we showcase further instantiations which incrementalize relational algebra and conflict-free replicated data types.

In \Lang{} programs, we distinguish between:
(i)~The $\src{core~calculus}$ providing generic primitive operations applicable to all data structures.
(ii)~The $\ext{domain\text{-}specific}$ operations, which are tied to a specific data structure and provided during instantiation.
(iii)~$\der{Derived}$ operations are defined by the user via core and domain-specific operations, and \Lang{} automatically derives incrementalizations for them.\footnote{Following \citet{Patrignani20}, we distinguish the three kinds by color and use of sans serif, bold face, and italics, respectively.}

\paragraph{Containers}
\Lang{} represents data structures as containers~\cite{Abbott05}. A container is defined
a set of shapes, and, for each shape, a set of indices. Formally:
\[\begin{array}{lll}
S &:& \mathsf{Type} \\
\pos{\cdot} &:& S \to \mathsf{Type} \\ 
\end{array}\]
For example, arrays are defined as the container where the shapes are natural numbers (representing the length of the array) and where a shape $n$ has indices $0, \dots, n-1$. Hence, for arrays, we have
$\pos{n} := \{0, \dots, n-1\}$.
Our language's semantics is defined over an arbitrary such container.

\paragraph{Generic operations}

\begin{figure}
\begin{minipage}{.55\textwidth}
\[\begin{array}{lll}
\src{dup} &:& A \leadsto A \times A \\
\src{cst} &:& \den{A} \to (B \leadsto A) \\
\src{map} &:& (A \leadsto B) \to (\src{F}_s~A \leadsto \src{F}_s~B) \\
\src{zip} &:& \src{F}_s~A \times \src{F}_s~B \leadsto \src{F}_s~(A \times B) \\
\src{map2} &:& (A \times B \leadsto C) \to (\src{F}_s~A \times \src{F}_s~B \leadsto \src{F}_s~C) \\
\src{get} &:& \pos{s} \to (\src{F}_s~A \leadsto A) \\
\src{set} &:& \pos{s} \to (A \times \src{F}_s~A \leadsto \src{F}_s~A) \\
\src{reshape} &:& (\pos{s_2} \to \pos{s_1}) \to (\src{F}_{s_1}~A \leadsto \src{G}_{s_2}~A) \\
\src{replicate} &:& (s : S) \to A \leadsto \src{F}_s~A \\
\src{tp} &:& \src{F}_{s_1}~(\src{F}_{s_2}~A) \leadsto \src{F}_{s_2}~(\src{F}_{s_1}~A) \\
\src{filter} &:& (\pos{s} \to \mathbb{B}) \to (A \times \src{F}_s~A \leadsto \src{F}_s~A) \\
\end{array}\]
\caption{Generic operations.}
\label{fig:genops}
\end{minipage}%
\begin{minipage}{.45\textwidth}
\[\begin{array}{lll}
\beta &:=& \mathbb{R} \\
(\cdot \oplus \cdot) &:=& (\cdot +_\mathbb{R} \cdot) \\
(\cdot \ominus \cdot) &:=& (\cdot -_\mathbb{R} \cdot) \\
\mathsf{S} &:=& \mathbb{N} \\
\pos{n} &:=& \{0, \dots, n-1\} \\
\den{\ext{relu}}~x &:=& \mathsf{if}~x<0~\mathsf{then}~0~\mathsf{else}~x\\
\den{\ext{*}}~(x, y) &:=& x \cdot y \\
\den{\ext{sum}}~x &:=& \Sigma_i (x~i)\\
\end{array}\]
\caption{Linear algebra instantiation (without incrementalization).}
\label{fig:linalginst}
\end{minipage}
\end{figure}

\Lang{} includes a number of generic operations (see Figure~\ref{fig:genops} for a subset, and the next section for the rest) that we deem useful for incremental functional programming.
We do not claim completeness, however, in Section~\ref{sec:app} we illustrate that they are expressive enough to implement interesting domain-specific operations for linear algebra, relational algebra and conflict-free replicated datatypes.
These generic operations are always available, no matter which data structure is used.
Note that $A \leadsto B$ denotes an incremental function in \Lang{}, while $A \to B$ denotes an arbitrary function in the meta language (in our case, Lean).
Further, $\den{A}$ is the (meta language) denotation of the object-level type $A$,
and $\src{F}_s~A$ is a container of shape $s$ with elements from $A$. In the case of linear
algebra, this means $\src{F}_s~A$ is an array of length $s$.

We provide second-order functions $\src{map}$, $\src{map2}$, and $\src{zip}$ with the usual semantics.
The function $\src{dup}$ duplicates a value and $\src{cst}~x$ is a constant function.
We allow single-element reads and writes via $\src{get}$ and $\src{set}$.
The operation $\src{reshape}~r$ applies a shape transformation by using function $r$ to transform indices. Its output can actually be of a different container type (here written $\mathsf{G}$) than its input.
For instance, we can express reversal of a list of length $n$ by defining $r$ to map index $i$ to $n-i-1$.
Next, $\src{replicate}$ produces a data structure where every entry holds the same value.
The function $\src{tp}$ transposes a nested data structure by swapping the outer and inner type constructors. 
Finally, $\src{filter}$ takes a boolean-valued function $p$ on indices and sets an entry to a default value if its index is mapped to false by $p$.
(In the next section, we describe the operators' denotational semantics in detail, cf.~Figure~\ref{fig:tmsem}.)
\Lang{} also offers operations for composing programs and for handling products and sums.
We omit them for now and discuss them later in Section~\ref{sec:fw}.

We define these operations in terms of the generic container representation.
This facilitates extensibility and reuse in \Lang{} since core operations (and their incrementalizations) are automatically available as soon as we define a data structure as a container.
We show that many data types can be encoded as containers, including matrices, relations, and the state vectors of CRDTs.
Hence, when a user wants to incrementalize a new application domain, they immediately gain access to a number of already incrementalized operations -- once they show that the underlying data type is a container.
The \Lang{} user only needs to provide incrementalizations for basic domain-specific operations (e.g., multiplication) and can then use the generic operations to construct more complex incremental programs.
This reduces the user's implementation effort and allows for very concise definitions of domain-specific instances, as illustrated by our case studies in Section~\ref{sec:app}.

\paragraph{Instantiating \Lang{} for Linear Algebra}\label{sec:laExtension}

\Lang{} supports modular domain-specific instantiations of the core calculus, facilitating adding support for specific application domains. As long as the domain-specific operations are given correct incrementalizations, all \Lang{} programs using them are guaranteed to have correct incrementalizations. Moreover, \Lang{}'s incrementalization combinators support the developers of domain-specific operations in providing correct incrementalizations.
Below, we introduce the instantiation mechanism informally through the example of linear algebra, as summarized in Figure~\ref{fig:linalginst}. This means defining a base type, a data structure, and a set of operations.

First, we define the \textit{base type} $\beta$ that the other types are built on top of.
The base type for our linear algebra instantiation is defined by~$\beta := \mathbb{R}$.
We also need to define the set of changes $\beta'$ and the update operation $\oplus$ applying a change to a value. In this case, $\beta' := \mathbb{R}$ and the update operation is addition.
We also define a subtraction operation that computes the difference between two values.
The above data determines a change structure~\cite{Cai14}, as explained further in Section~\ref{sec:fw}.
Next, we define that this instantiation's domain data structure are arrays.
To reuse the data-type-generic operations from the core calculus (Figure~\ref{fig:genops}), this definition is given by describing arrays as a container.
For this we define the shape type of array as $S := \mathbb{N}$, corresponding to the length of the array.
As mentioned above, we define the position type as $\pos{n} := \{0, \dots, n-1\}$, meaning that an array of length $n$ has valid indices from $0$ to $n-1$.
Defining arrays this way implies that a vector of length $n$ is represented as $\mathsf{F}_n~\beta$ and an $n\times m$-matrix is represented as $\mathsf{F}_n~(\mathsf{F}_m~\beta)$.

We only add three primitives for our linear algebra instantiation: 
The rectified linear unit ($\ext{relu} : \beta \to \beta$),
scalar multiplication ($\ext{*} : \beta \times \beta \to \beta$),  
and vector summation ($\ext{sum} : \mathsf{F}_s~\beta \to \beta$).
In Figure~\ref{fig:linalginst}, the semantics of each operation $o$ is written
$\den{o}$ (e.g. $\den{\ext{relu}}$).
In combination with the generic operations, these three domain-specific primitives are sufficient to define many more complex linear algebra operations.
For example, using our generic operations and the scalar multiplication,
we can construct matrix-vector multiplication as a derived operation.
In mathematical notation, matrix-vector multiplication is defined as follows:
\[(\mathsf{mvmul}~M~v)[i] = \textstyle \sum_j (M[i, j] \cdot v[j])\]
Operationally, we match up each element $x$ of $v$ with a column
of $M$, multiplying $x$ with every entry in the column and then summing up the results.
In \Lang{}, this is written as follows (where we map over two arrays at once using $\src{map2}~f$):
\[ \der{mvmul} := \src{(id \times replicate )~;~\src{map2}~(\src{map2}}~\ext{*} \src{)~;~map}~\ext{sum}. \]
The program $\der{mvmul}$ takes a pair $(M, v)$ as input and computes $M~v$. It does this in three steps:
First $\src{replicate}$ is applied to $v$, turning it into a matrix $v'$ by replicating every element.
Then, element-wise multiplication is performed between $M$ and $v'$. On the result of that, we apply row-wise summation.

Our $\der{mvmul}$ is a derived operation that is defined using both the core calculus and operations from the linear-algebra-specific instantiation. Namely, the operations $\ext{mul}$
and $\ext{*}$ are linear algebra primitives (defined in Figure~\ref{fig:linalginst}). Most of the constructs used to define $\der{mvmul}$ come from the core calculus, including the identity function~$\src{id}$, the function composition operator $\src{(}f~\src{;}~g\src{)}$, the parallel operator $\src{(}f~\src{\times}~g\src{)}$ which returns $(f~x, g~y)$ for input $(x,y)$, and the second-order operator $\src{map2}$.

For illustration, compare the mathematical description of a dense layer operating on vectors (left), with an implementation in \Lang{} (right):
\[
\begin{array}{c}
M : \mathsf{F}_m~(\mathsf{F}_n~\beta) ~\qquad\qquad~ b : \mathsf{F}_m~\beta ~\qquad\qquad~ \dense{} : \mathsf{F}_n~\beta \to \mathsf{F}_m~\beta \\[1em]
\dense{}~x := \mathsf{map}(\mathsf{relu}, M \cdot x + b) \qquad
  \begin{array}[t]{ll}
  \der{dense} :=& \pair{\src{cst}~M}{\src{id}} ; \der{mvmul} ; \\
  & \pair{\src{cst}~b}{\src{id}} \src{; map2~+ ;} \\
  & \src{map}~\ext{relu}
  \end{array}
\end{array}
\]

The function $\der{dense}$ uses our derived $\der{mvmul}$, the linear algebra primitive
$\ext{relu}$, the second-order operator $\src{map}$, and the constant function combinator $\src{cst}$ (we also make use of syntactic sugar: $\pair{f}{g} := \src{(dup;}~f~\src{\times}~g\src{)}$, which applies two functions to the same value by duplicating it first).

Informally, \dense{} is implemented as the composition of pushing the matrix $M$, multiplying $M$ with the argument of the function, adding $p$ to the result of the multiplication (using $\src{map2}$), and applying $\ext{relu}$ on each element.

\paragraph{Cached Incrementalization for the Linear Algebra Instantiation}
The final step of the linear algebra instantiation consists of defining the cached incrementalizations for the domain-specific operations.
The changes that we consider are element-wise modifications, where we add real numbers (that may be zero or negative) to each entry in the value. So for a vector, a change is another vector that is added element-wise.

To incrementalize $\ext{relu}$ and $\ext{*}$, we need to cache both input values.
In both cases, we compute the derivative by subtracting the old output from the new output.
For example, the derivative for $\ext{relu}$ is
\[\lambda (x', x).~(\den{\ext{relu}}~(x \oplus x') \ominus \den{\ext{relu}}~x,~x \oplus x'),\]
where $x'$ is the change, $x$ is the old input, $x + x'$ is the new input, and $\den{\ext{relu}}~(x + x')$ is the new output. The derivative takes the change and the old input. It returns the difference between the new output and the old output.

Note that for a scalar operation such as ReLU or multiplication by itself, incrementalization does not provide performance benefits.
On the contrary, it increases memory requirements due to caching. However, defining their incrementalizations, allows us to apply incrementalization \textit{compositionally} by combining the caches of subexpressions. For example, when applying $\src{map}$ to $\ext{*}$, we achieve element-wise vector multiplication with a vector-shaped cache. In this case, the cached incrementalization of multiplication ultimately enables an asymptotic speed-up: We only update one element of the output vector if one input element changes, whereas reevaluation would have to traverse an entire vector. More concretely, the incremental version $\src{map}$ takes a change and maps the incremental version of $\ext{*}$ over it. As the change only consists of one element, the incremental version of $\ext{*}$ is only called once.

We spelled out the incrementalization of ReLU explicitly, but we can also give a more concise definition using \textit{incrementalization combinators}, which we introduce in more detail in Section~\ref{sec:incrComb}. 
Intuitively, the incrementalization combinators reify common patterns in incrementalizations and can thereby simplify the implementation. For a function $f$ (that may need to fulfill some property, depending on the combinator), an incrementalization combinator returns a tuple $(\ctype{f}, \init{f}, \deriv{f})$ containing the cache type, initialization, and derivative.
When an incrementalization combinator's preconditions are fulfilled, it is formally proven to output a correct incrementalization.

As mentioned above, the incrementalizations of $\ext{relu}$ and $\ext{*}$ are both defined by caching their inputs and by computing the derivative using subtraction of the old output from the new one. We capture this pattern in the $\textsc{Triv}$ combinator, which creates a ``trivial'' incrementalization that performs reevaluation.
Another incrementalization combinator is $\textsc{Lin}$, which captures the pattern of incrementalizing linear functions, as a linear function is its own derivative.
An example of a linear function is the $\ext{sum}$ operator.
Accordingly, the incrementalizations can be defined as in terms of the normal evaluation semantics of the operators (we write $\den{f}'$ for the tuple $(f_C, f_i, f_d)$):
\[
\den{\ext{relu}}' := \textsc{Triv}~\den{\ext{relu}} \qquad\qquad
\den{\ext{*}}'    := \textsc{Triv}~\den{\ext{*}} \qquad\qquad
\den{\ext{sum}}'  := \textsc{Lin}~\den{\ext{sum}}
\]

%% file: framework.tex
In this section, we formally describe our approach.
First, we describe the syntax and denotational semantics of our calculus \Lang{}.
We also give the interface used for instantiating \Lang{} to new data structures and operations.
Adding new operations requires us to also define their incrementalization.
We propose a small set of combinators that simplify the implementation of correct-by-construction incrementalizations.
We then explain how our approach allows us to automatically and soundly incrementalize computations by defining an incrementalization transformation on \Lang{} terms. 

\subsection{Interface}
\label{sec:if}

As already mentioned, \Lang{} is parametrized over the choice of container and base type. 

Figure~\ref{fig:inter} presents \Lang{}'s parametric interface.
The figure's top half shows $n$-ary containers, which represent type constructors with $n$ arguments.
In this context, $S$~denotes the container's shape type while $\pos{\cdot}$ takes a shape
and gives an index type for each of the $n$ type arguments.
To ensure our proof is constructive, we require shapes to have decidable equality
(our proofs do depend on propositional and functional extensionality).

To aid comprehension we make two simplifications in the rest of the paper.
First, our mechanization supports multiple containers and base types.
Having multiple containers does allow to model, for example, nested different containers,
such as a relation whose elements contain arrays like in a vector database.  
In the paper, we will only show the case of a single container and base type.
Second, our mechanization supports containers with multiple type arguments ($n$-ary containers).
In the paper, we will only present containers with a single type argument (unary containers),
as all operations except $\src{tp}$\footnote{
transposition only makes sense on unary containers.
} generalize in a non-surprising fashion from the unary to the $n$-ary case,
we limit the presentation in the paper to the unary case.
Details on multiple containers, multiple base types and $n$-ary containers,
can be found in the file \texttt{Correctness.lean} in the artifact.

We require the base type to form a \emph{change structure}~\cite{Cai14}, i.e., a structure combining a type of values, a type of changes, and operations applying changes to values.
We denote the base value type by~$\beta$ and its
corresponding type of changes by~$\beta'$.
We require an update operation~$\oplus$ and corresponding difference operation~$\ominus$, such that $x \oplus (y \ominus x) = x$.
The update operation allows us to formalize the notion of change.
The difference operation is used by the \textsc{Triv} combinator to compute output differences and for the derivative of operations whose output is partly or wholly constant (as $x \ominus x$ is the \emph{nil change} for $x$, i.e. we have $x \oplus (x \ominus x) = x$ by the just-mentioned law).
Our change structures differ from the original version presented by \citet{Cai14} in that our change type is fixed rather than dependent on the value.
In this paper, our examples and figures assume that are there is only one base type $\beta$. This is just for simplicity of the presentation---our mechanization allows for an arbitrary number of base types.

\subsection{Syntax and Semantics}

\begin{figure}
\begin{minipage}{.5\textwidth}
\[\begin{array}{lll}
\multicolumn{3}{l}{\textsc{Container}}\\
  S &:& \mathsf{Type} \\
  \pos{\cdot} &:& S \to \mathsf{Type} \\
  \mathsf{deceq} &:& (s : S) \to \mathsf{DecidableEq}~s
  \\[1em]
\multicolumn{3}{l}{\textsc{Change Structure}}\\
  \beta &:& \mathsf{Type} \\
  \beta' &:& \mathsf{Type} \\
  (\cdot \oplus \cdot) &:& \beta \to \beta' \to \beta \\
  (\cdot \ominus \cdot) &:& \beta \to \beta \to \beta' \\
  \mathsf{complete} &:& x \oplus (y \ominus x) = x
\end{array}\]
\caption{Container and Change Structure Interface.}
\label{fig:inter}
\end{minipage}%
\begin{minipage}{.5\textwidth}\centering
$s : \mathsf{S} \qquad r : \pos{s_2} \to \pos{s_1} \qquad c : \beta$ \\[.5em] $i : \pos{s} \qquad
\ext{o} : \mathsf{Op}~A~B \qquad p : \pos{s} \to \mathbb{B}$\\
  \[ \begin{array}{rll}
  \multicolumn{2}{l}{\textsc{Types}} & \\
   A, B  ::= & \src{b} \mid \src{F}_s~A \mid A~\times~B \mid A~+~B \\[1em]
  \multicolumn{2}{l}{\textsc{Expressions}} & \\
  e, f ::= & e_1~\src{;}~e_2 \mid e_1~\src{\times}~e_2 \mid \src{id} \mid \src{dup} \mid \src{fst} \mid \src{snd} \\
  \mid & \src{map}~e  \mid \src{zip} \mid \src{get}~i \mid \src{set}~i \mid 
  \src{reshape}~r \\
  \mid & \src{replicate}~s \mid \src{tp} \mid \src{filter}~p \mid \src{fuse} \mid \src{distr} \\
  \mid & \src{inl} \mid \src{inr} \mid e_1~\src{||}~e_2 \mid \src{cst}~c \mid \src{+} \mid \src{op}~\ext{o}
\end{array} \]
\caption{Syntax.}
\label{fig:syntax}
\end{minipage}
\end{figure}

\begin{figure}
\begin{mathpar}
\inferrule[]{\vdash e_1 : A \leadsto B \\ \vdash e_2 : B \leadsto C}{\vdash e_1~\src{;}~e_2 : A \leadsto C}

\inferrule[]{\vdash e_1 : A \leadsto B \\ \vdash e_2 : C \leadsto D}{\vdash e_1~\src{\times}~e_2 : A\times C \leadsto B\times D}

\inferrule[]{ \ext{o} : \mathsf{Op}~A~B}{\vdash \src{op}~\ext{o} : A \leadsto B}

\inferrule[]{\vdash e : A \leadsto B}{\vdash \src{map}~e : \src{F}_s~A \leadsto \src{F}_s~B}

\inferrule[]{ \vdash e_1 : A \leadsto B \\ \vdash e_2 : C \leadsto D }{ \vdash e_1~\src{||}~e_2 : A + C \leadsto B + D }

\vdash \src{id} : A \leadsto A \quad
\vdash \src{fst} : A \times B \leadsto A \quad \vdash \src{snd} : A \times B \leadsto B \quad
 \vdash \src{+} : A \times A \to A^\dag \\ \vdash \src{distr} : A \times (B + C) \leadsto
A \times B + A \times C \quad
\vdash \src{fuse} : A + A \leadsto A \quad \vdash \src{inl} : A \leadsto A + B
\quad \vdash \src{inr} : B \leadsto A + B
\end{mathpar}
\caption{Types of core constructs (we omit those already shown in Figure~\ref{fig:genops}). $^\dag$$\src{+}$ requires that values and changes are the same, i.e. $\beta = \beta'$, as its semantics uses $\oplus$, which would be ill-typed otherwise.}
\label{fig:typ}
\end{figure}

\begin{figure}
\begin{mathpar}
\inferrule[]{f : \den{A} \to \den{B} \\ C : \mathsf{Type} \\ i : A \to B \times C \\ d : A' \to C \to B' \times C }{(f, C, i, d) : \mathsf{Op}~A~B}
\end{mathpar}

\caption{User-defined operations.}
\label{fig:ext}
\end{figure}

In this section we describe \Lang{}'s syntax, type system, and denotational semantics.
Figure~\ref{fig:syntax} presents the syntax of our core calculus.
We denote shapes by $s:S$, shape transformations by $r$, constants by $c:\beta$, positions by $i$, operations by $\ext{o}$ and predicates by $p$.
A type is either the base type $\src{b}$, a data structure $\src{F}_s~A$ of shape $s$, a product type $A ~\times~ B$, or a sum type $A~+~B$.
As the shape is part of the type, we statically know the shapes of all values computed by any \Lang{} program.
Static shape typing also ensures totality, since operations like $\mathsf{get}$ cannot go out of bounds.
This simplifies~the~denotational~semantics.

\Lang{} supports the following expressions $e$:
(i)~Sequential composition of two expressions $e_1~\src{;}~e_2$,
(ii)~parallel composition of two operations $e_1~\src{\times}~e_2$,
(iii)~literals~$\src{cst}~c$.
(iv)~identity~$\src{id}$,
(v)~duplication~$\src{dup}$,
(vi)~tuple projection~$\src{fst}$ and~$\src{snd}$,
and (vii)~addition~$\src{+}$.
The literals must be valid values according to the type denotation defined below.

In addition to tuples, \Lang{} also supports sum types. The operators for products and sums are based
on the structure of distributive categories~\cite{Elliott17}. In particular, sums can be constructed with
$\src{inl}, \src{inr}$ and manipulated with $e_1~\src{||}~e_2, \src{fuse}$, and $\src{distr}$.

\Lang{}'s expressions also include container operations\footnote{We omit the previously shown $\src{map2}$ as it can be defined by composing $\src{map}$ and $\src{zip}$}:
(i)~$\src{get}$ retrieves and $\src{set}$ sets the~container element at index $i : \pos{s}$. 
(ii)~$\src{replicate}$ creates a data structure with all elements set~to~the~same value.
(iii)~$\src{filter}$ takes a predicate on indices $p$ and zeroes out all elements at indices $i$ where $p~i$ is false.
(iv)~$\src{map}~e$ maps a function over a container.
(v)~$\src{zip}$ combines two equally shaped containers, pairing up their elements.
(vi)~$\src{tp}$ transposes nested data structures.
(vii)~$\src{op}$ allows calling a user-provided operation $\ext{o}$.
(viii)~$\src{reshape}$ changes a container's shape by rearranging its contents via a shape transformation~$r : \pos{s_2} \to \pos{s_1}$.
The latter maps indices in $s_2$ to $s_1$.

We present the type system for \Lang{} in Figure~\ref{fig:typ}.
The typing judgment has the form $\vdash e : A \leadsto B$, meaning that $e$ transforms values of type $A$ into values of type $B$.
Figure~\ref{fig:ext} formalizes our approach to specify the additional operations that ship with a domain-specific instantiation.
User defined functions~$\ext{o}$ are packed into the $\mathsf{Op}$ type.
This type associates functions with their cached incrementalization.
The latter include a cache $C$, an initialization $i$, and a derivative $d$.
(We require them to comply with laws ensuring soundness of the incrementalization, cf.~Section~\ref{sec:corrst}.)

Figure~\ref{fig:tmsem} presents \Lang{}'s denotational semantics.
The semantics is implicitly parametrized over the container and change structure interfaces described above.
Types are given both a value semantics $\den{\cdot}$ and a change semantics $\den{\cdot}'$, which only differ in the case for the base type.
In both cases, we interpret container types as functions from indices to values.
The change semantics for sums allows to either (i) change a value while staying on the same ``side'' of the sum (i.e., going from $\iota_1~x$ to $\iota_1~(x+x')$ or from $\iota_2~x$ to $\iota_2~(x+x')$) (ii) set a new value.
For space reasons, it is shown in Figure~\ref{fig:sumcs} in Appendix~\ref{sec:sums}.

As to the term semantics, we describe the results of data transformations by stating the value at each index. 
For instance, $\den{\src{zip}}~(x, y)$ yields a container with entries~$(x~i, y~i)$ at each index~$i$. As another example, $\den{\src{reshape}~r}~x$ outputs a data structure which has at index $i$ the value taken from index~$r~i$ in~$x$.

\begin{figure}
\begin{minipage}{.24\textwidth}\centering
$\inferrule{\vdash e : A \leadsto B}{\den{e} : \den{A} \to \den{B}}$
\end{minipage}%
\begin{minipage}{.38\textwidth}\centering
$\begin{array}{lll}
\den{\src{b}} &:=& \beta \\
\den{\src{F}_s~A} &:=& \pos{s} \to \den{A} \\
\den{A~\times~B} &:=& \den{A} \times \den{B}
\end{array}$
\end{minipage}%
\begin{minipage}{.38\textwidth}\centering
$\begin{array}{lll}
\den{\src{b}}' &:=& \beta' \\
\den{\src{F}_s~A}' &:=& \pos{s} \to \den{A}' \\
\den{A~\times~B}' &:=& \den{A}' \times \den{B}'
\end{array}$
\end{minipage}

\vspace{1em}

\begin{minipage}{.5\textwidth}
$\begin{array}{lll}
\den{\src{id}}~x &:=& x \\
\den{\src{cst}~c}~x &:=& c \\
\den{e_1~\src{;}~e_2}~x &:=& \den{e_2}~(\den{e_1}~x) \\
\den{e_1~\src{\times}~e_2}~(x, y) &:=& (\den{e_1}~x, \den{e_2}~y) \\
\den{\src{dup}}~x &:=& (x, x) \\
\den{\src{+}}~(x,y) &:=& x \oplus y \\
\den{\src{fst}}~(x,y) &:=& x \\
\den{\src{snd}}~(x,y) &:=& y \\
\den{\src{inl}}~x &:=& \iota_1~x \\
\den{\src{inr}}~x &:=& \iota_2~x \\
\den{\src{distr}}~(x, \iota_1~y) &:=& \iota_1~(x, y) \\
\den{\src{distr}}~(x, \iota_2~y) &:=& \iota_2~(x, y) \\
\end{array}$
\end{minipage}%
\begin{minipage}{.5\textwidth}
$\begin{array}{lll}
\den{\src{replicate}~s}~x &:=& \lambda i.~x \\
\den{\src{map}~e}~x &:=& \lambda i.~\den{e}~(x~i) \\
\den{\src{zip}}~(x, y) &:=& \lambda i.~(x~i, y~i) \\
\den{\src{reshape}~r}~x &:=& \lambda i.~x~(r~i) \\
\den{\src{get}~i}~x &:=& x~i \\
\den{\src{set}~i}~(x, a) &:=& \lambda j.~\mathsf{if}~i=j~\mathsf{then}~x~\mathsf{else}~a~j \\
\den{\src{tp}}~x &:=& \lambda i.~\lambda j.~x~j~i \\
\den{\src{filter}~p}~(x, a) &:=& \lambda i.~\mathsf{if}~p~i~\mathsf{then}~a~i~\mathsf{else}~x \\
\den{\src{fuse}}~(\iota_1~x) &:=& x \\
\den{\src{fuse}}~(\iota_2~x) &:=& x \\
\den{e_1~\src{||}~e_2}~(\iota_1~x) &:=& \den{e_1}~x \\
\den{e_1~\src{||}~e_2}~(\iota_2~x) &:=& \den{e_2}~x \\
\end{array}$
\end{minipage}
\caption{Semantics of types and terms.}
\label{fig:tmsem}
\end{figure}

\subsection{Incrementalization}
\label{sec:tf}

In this section, we describe the incrementalization function.
For every type of values $A$, we refer to its corresponding type of changes as $A'$.
We also denote values by $x$ and changes by $x'$.
The incrementalization of a function $f : A \to B$ consists of:
\[\begin{array}{lll}
\ctype{\den{f}} &:& \mathsf{Type} \\
\init{\den{f}} &:& A \to B \times \ctype{\den{f}} \\
\deriv{\den{f}} &:& A' \to \ctype{\den{f}} \to B' \times \ctype{\den{f}} \\[.5em]
\den{f}' &:=& (\ctype{\den{f}},~ \init{\den{f}},~ \deriv{\den{f}}) \\
\end{array}\]
$\ctype{\den{f}}$ is the cache type. 
The initialization function $\init{\den{f}}$ takes a value $x$ and returns $(f~x, c)$ where $c$ is the initial cache.
The derivative $\deriv{\den{f}}$ takes an input change $x'$ and the current cache. It returns an output change and the new state of the cache. We write $\den{f}'$ as a~shorthand~for~$(\ctype{\den{f}},~ \init{\den{f}},~ \deriv{\den{f}})$.

We present the incrementalization in Figure \ref{fig:incrRules}.
The basic operations are defined using the \textsc{Self} and \textsc{Add} combinators, which we explain in Section~\ref{sec:incrComb}.
Incrementalizing a sequential composition~$\src{(} f \src{;} g \src{)}$ combines the caches of the incrementalized subexpressions. 
To evaluate the composition's derivative, we split the cache into parts for $f$ and $g$, evaluate their respective derivatives, and then recombine the updated caches.
The same principle applies to parallel composition $\src{(} f \src{\times} g \src{)}$.
When we apply $\src{map}_s~f$ to a data structure, we essentially perform repeated parallel composition. Hence, the cache type is $\pos{s} \to \ctype{\den{f}}$, with one copy of the cache type of $f$ for each position in the data structure. 
Similar to parallel composition, each call to $f$ receives its own version of the cache.
The incrementalizations for $\src{fuse}, \src{distr}, \src{||}$ are rather lengthy,
and hence delegated to Appendix~\ref{sec:sums}.

\input{figures/incrRules}

\subsection{Self-maintainability}

For some functions, the output change only depends on the input change
but not on the input itself. Formally the derivative $d$ has the following property:
\[\forall x~y~x'.~d~x~x' = d~y~x'. \]
Such functions are called \textit{self-maintainable}~\cite{Cai14}.
A self-maintainable function can be incrementalized without storing its intermediate results, 
i.e., the cache type can be $\mathsf{Unit}$. 
This is because caching is only necessary if we require additional information besides the input change.
Our generic operations are self-maintainable and the combinators of our core calculus preserve self-maintainability.

In incrementalization, this property can be exploited for efficiency,
as self-maintainable functions do not require a cache and only operate on changes, rather than whole values.
\citet{Cai14} realize the efficiency benefit of self-maintainability using lazy evaluation, but mention dead code elimination as an alternative.
In our approach, programs built out of self-maintainable operations automatically
yield incrementalizations with an empty cache and therefore only operate on changes.

As an example, consider the $\src{zip}$ function. A change to one of the arguments of $\src{zip}$ is passed on as a change of the output, without using the current value. Concretely, we have $\den{\src{zip}}' (x \oplus x', y \oplus y') := \den{\src{zip}}~(x', y')$.
Hence, computing $\den{\src{zip}}'$ does not use $x$ or $y$, but purely inspects the changes.

\subsection{Incrementalization Combinators}
\label{sec:incrComb}

\input{figures/incrComb}

When instantiating \Lang{} to a specific domain, the user has to provide incrementalizations for newly added primitive operations. 
To aid in the implementation of such incrementalizations, we provide a set of \text{incrementalization combinators}, presented in Figure~\ref{fig:incrComb}.
Each combinator has some inputs and preconditions and returns a tuple $(C, i, d)$ such that $d$ is the derivative of a function $f$, with $C$, $i$, and $d$ having the right types as declared in Section~\ref{sec:tf}.

For any function $f$, we can use the \textsc{Triv} combinator to create a \textit{trivial} incrementalization, i.e., one that just recomputes values upon changes.
Technically, this construction meets the formal requirements necessary to consider it an incrementalization, yet it does not provide any speed-up.
However, such incrementalizations can still be useful: 
Consider scalar operations such as real multiplication.
Since asymptotic speed-ups via incrementalization are impossible here, we may as well use the trivial incrementalization.
The \textsc{Triv} combinator relies on the difference function~$\ominus$.
For any values $a, b$ it computes a change $a \ominus b$ that transforms $b$ into $a$. 
For example, if $A$ is the set of real numbers and the update operation is addition, then~$\ominus$ is subtraction.
For scalar multiplication, we write the~trivial~incrementalization as $\textsc{Triv}~\den{\ext{*}}$. 
The induced derivative is then simply $\lambda ((x_1', x_2'), (x_1, x_2)). (y \ominus (x_1 \cdot x_2), x \oplus x')$, where $y := (x_1 \oplus x_1') \cdot (x_2 \oplus x_2')$ is the new output.
The \textsc{Triv} combinator has the drawback of evaluating $f$ twice.
We therefore also implement the \textsc{Triv2} combinator, which only evaluates $f$ once, by storing both $f$'s input \emph{and output}.
This allows trading off memory for speed.

The combinator \textsc{Self} simplifies incrementalization for the special case of self-maintainable functions. Recall that in the context of incrementalization, a function is self-maintainable if an output change only depends on the input change, but not on the current value.
In other words, it has a constant derivative. More precisely, for a self-maintainable function $f : A \to B$, there exists a function $d : A' \to B'$ such that $f~(x \oplus x') = f~x \oplus d~x'$.
For instance, the function $\src{(cst}~c\src{)}$ is self-maintainable, with derivative 0.
This essentially holds because $\src{cst}$ has a constant output and changes are discarded.
This makes the derivative~$\den{\src{cst~}0}$ and we can define its incrementalization as
$\textsc{Self}~\den{\src{cst~}c}~\den{\src{cst~}0}$.

The \textsc{Self} combinator constructs an incrementalization that does not require a cache, which is indicated by defining the cache type to be $\mathsf{Unit}$.
(We write $\star$ for the sole element of the unit type.)
In \Lang{}, a surprisingly large number of operations is implemented using \textsc{Self}, cf.~Figure~\ref{fig:incrRules}.
For example, it is straightforward to see that $\src{dup}$ is self-maintainable, as we have
$\den{\src{dup}}~(x \oplus x') = (x \oplus x', x \oplus x') = (x, x) \oplus (x', x') = \den{\src{dup}}~x \oplus \den{\src{dup}}~x'$.

The remaining combinators only apply in the special case where values and changes form one type, i.e. $A = A'$. To show this for all types, it is only necessary to show it for the base type.

Incrementalizing \textit{linear} functions, which are their own derivative, is even simpler.
Essentially, linear functions are self-maintainable functions where $f = d$.
Note that this is only well-defined for functions $f : A \to B$ where $A = A'$ and $B = B'$ (this follows if $\beta = \beta'$ i.e. the base value and change types are the same).

We can also incrementalize \textit{bilinear} functions, which are binary functions that are linear in their first and second arguments. In contrast to linear functions, incrementalizing bilinear functions does require a cache~\cite{Budiu23}.
An example of a bilinear function is the relational cross-product.

Finally, if the update operation $(\cdot \oplus \cdot)$ is associative and commutative, it can also be used as an operation such that it is its own derivative.
This is because the soundness property for incrementalization of $(\cdot \oplus \cdot)$ is
\[(x~\oplus~x')~\oplus~(y~\oplus~y') = (x~\oplus~y)~\oplus~(x'~\oplus~y'),\]
which clearly follows from associativity and commutativity.

\subsection{Growing Data under Static Shapes}

Our core calculus enforces static container shapes by design:
a container's shape is fixed at the type level and cannot change dynamically during execution.
This ensures the totality of key operations such as $\src{set}, \src{get}, \src{tp}$.
Shape changing operations can be implemented, as long as the shape is still statically fixed.
For example, we can implement $\ext{append}$ as function going from shape $n$ to shape $n+1$: %
\[ \ext{append} : A \times \src{F}_n~A \leadsto \src{F}_{n+1}~A := (\src{id}~\src{\times}~\src{reshape}~(\lambda i.~i-1))~\src{;}~\src{set}~0 \]

Limiting the change structures to maintain the current shape also enables operations $\src{get}$ to be self-incrementalizable. 
Despite that, it is also possible to implement growing data structures, as we show in our relational algebra and tree examples in Section~\ref{sec:app}.
Where we included the dimension of an tensor in the shape in our linear algebra case study, e.g., the length of an array is tracked statically,
our relational algebra case study only tracks the relation's \textit{schema} statically, e.g., the number of items in the relation can grow dynamically.
Further, in the tree case study, we simply treat the shape as the unit type, which effectively means treating all shapes dynamically.

%% file: figures/incrRules.tex
\begin{figure}

\raggedright
$
\begin{array}[t]{lllll}
  \den{f}' &:=& \textsc{Self}^\diamond~\den{f} 
  &&f \in \{\src{id}, \src{dup}, \src{fst}, \src{snd}, \src{zip}, \src{tp},
          \src{get}~i, \src{set}~i, \src{reshape}~r, \src{replicate}~s\\
  &&&&\phantom{f \in \{}\src{filter}~p, \src{cst}~c, \src{inl}, \src{inr}\}
\end{array}
$
\vspace{0.5em}
\\
$\begin{array}[t]{lll}
\ctype{\den{\src{map}_s~f}} &:=& \pos{s} \to \ctype{\den{f}} \\
\init{\den{\src{map}_s~f}}~x &:=& \mathsf{let}~a=\lambda i.~\init{\den{f}}~(x~i); \\
&& (\lambda i. (a~i)_1,~\lambda i. (a~i)_2) \\
\deriv{\den{\src{map}_s~f}}~x'~c &:=& (\lambda i.~(\deriv{\den{f}}~(x'~i)~(c~i))_1, \\
&& \phantom{(}\lambda i.~(\deriv{\den{f}}~(x'~i)~(c~i))_2) \\
\end{array}\hfill\begin{array}[t]{lll}
\den{\src{cst}~c}' &:=& \textsc{Self}~\den{\src{cst}~c}~\den{\src{cst}~0} \\
\den{\src{+}}' &:=& \textsc{Add}^\dag \\
\den{\src{op}~\ext{o}}' &:=& \ext{o} \\
\end{array}\hfill$

\vspace{1em}

$\begin{array}{lllll}
\ctype{\den{f~\src{;}~g}}\ &:=&
  \ctype{\den{f}} \times \ctype{\den{g}} \\
\init{\den{f~\src{;}~g}}~x &:=&
  \mathsf{let}~(y,c_1)=\init{\den{f}}~x; &
  \mathsf{let}~(z,c_2)=\init{\den{g}}~y; &
  (z, (c_1, c_2)) \\
\deriv{\den{f~\src{;}~g}}~x'~c &:=&
  \mathsf{let}~(y', c_1')=\deriv{\den{f}}~x'~c_1; &
  \mathsf{let}~(z', c_2')=\deriv{\den{g}}~y'~c_2; &
  (z', (c_1', c_2')) \\
\ctype{\den{f~\src{\times}~g}} &:=&
  \ctype{\den{f}} \times \ctype{\den{g}} \\
\init{\den{f~\src{\times}~g}}~(x_1, x_2) &:=&
  \mathsf{let}~(y_1,c_1)=\init{\den{f}}~x_1; &
  \mathsf{let}~(y_2,c_2)=\init{\den{g}}~x_2; &
  ((y_1, y_2), (c_1, c_2)) \\
\deriv{\den{f~\src{\times}~g}}~(x_1', x_2')~c &:=&
  \mathsf{let}~(y_1', c_1')=\deriv{\den{f}}~x_1'~c_1; &
  \mathsf{let}~(y_2', c_2')=\deriv{\den{g}}~x_2'~c_2; &
  ((y_1', y_2'), (c_1', c_2')) \\
\end{array}\hfill$

\caption{Incrementalization on \Lang{} terms.
The \textsc{Self} combinator incrementalizes self-maintainable, i.e., cache-free operations. We write $\textsc{Self}^\diamond~x$ as shorthand for
$\textsc{Self}~x~x$. $^\dag$Incremental addition requires associativity and commutativity (see Section~\ref{sec:incrComb}).
}
\label{fig:incrRules}
\end{figure}

%% file: figures/incrComb.tex
\begin{figure}

\small
\begin{mathpar}
    \inferrule[]{f : A \to B}{
        \textsc{Triv}~f := (A,~\lambda x.(f~x, x),~\lambda (x', x).(f~(x \oplus x') \ominus f~x,~x \oplus x'))    
    }

    \inferrule[]{f : A \to B}{
        \textsc{Triv2}~f := (A \times B,~\lambda x.(f~x, (x, f~x)),~\lambda (x', (x, y_1)).
            \mathsf{let}~y_2 := f~(x \oplus x'); (y_2 - y_1, (x \oplus x', y_2)))    
    }

    \inferrule[]{f : A \to B \\ d : A' \to B' \\\\ f~(x \oplus x') = f~x \oplus d~x'}{
        \textsc{Self}~f := (\mathsf{Unit},~\lambda x.(f~x, \star),~\lambda (x', \star). (d~x', \star))
    }
	
    \inferrule[]{f : A \to B \\ A = A' \\ B = B' \\\\ f~(x \oplus x') = f~x \oplus f~x'}{
        \textsc{Lin}~f := (\mathsf{Unit},~\lambda x. (x, \star),~\lambda (x', \star). (f~x', \star))
    } 

    \inferrule[]{f : A \times B \to C \\ A = A' \\ B = B' \\ C = C' \\ f~(x \oplus x',~y) = f~(x, y) \oplus f~(x', y) \\ f~(x, y \oplus y') = f~(x, y) \oplus f~(x, y') \\ \forall (x~y : C).~x \oplus y = y \oplus x \\ \forall (x~y~z : C).~x \oplus (y \oplus z) = (x \oplus y) \oplus z}{
        \textsc{BiLin}~f := (A \times B,~\lambda (x, y).~(f~(x, y), x, y),~\lambda (x', y')~(x, y).~(f~(x', y') \oplus f~(x', y) \oplus f~(x, y'), x \oplus x', y \oplus y'))
    }

    \inferrule[]{A = A' \\ \forall (x~y : A).~x \oplus y = y \oplus x \\ \forall (x~y~z : A).~x \oplus (y \oplus z) = (x \oplus y) \oplus z}{
    \textsc{Add} := (\mathsf{Unit},~\lambda (x,y).(x \oplus y, \star),~\lambda (x', y', \star). (x' \oplus y', \star))
    }
\end{mathpar}    

\caption{Incrementalization combinators.}
\label{fig:incrComb}
\end{figure}

%% file: correctness.tex
In this section, we describe our Lean mechanization, including a proof that our incrementalization is \emph{\sound{}}, meaning that it computes the same result as full re-evaluation.
This captures the idea that incrementalization is an optimization that should avoid unnecessary recomputations, but otherwise preserve the original program behavior.

The section is structured into five subsections that systematically establish correctness.
First, we establish \emph{typability preservation} through intrinsic types (\S~\ref{sec:Correctness:IntrinsicTyping}).
Specifically, we demonstrate how our mechanization employs intrinsic types to guarantee typability across denotational semantics, incrementalization procedures, and combinator implementations.
Second, %
to formalize incrementalization, we state a \emph{formal definition of data type changes} as change structures (\S~\ref{sec:Correctness:Changes}). As noted in Section~\ref{sec:fw}, we assume that the base type forms a change structure.
We then prove that this ensures that all types (i.e. arbitrarily nested containers, products, and sums) form a change structure, so that a meaningful notion of change exists for any type.
Third, introduce the notion of a \emph{consistent incrementalization} to capture the requirements that the elements of the incrementalization tuple fulfill, and prove that \emph{all consistent incrementalizations are \sound{}}, i.e., produce identical results to full recomputation (\S~\ref{sec:corrst}).
Fourth, it is only left to show that our \emph{transformation and incrementalization combinators are consistent}, and hence \sound{} (\S~\ref{sec:incrsound}).
Finally, we state and prove under which conditions \Lang{}'s operations ensure that containers have \textit{finite support}, which is relevant for allowing finite representations of infinite containers like relations (\S~\ref{sec:finsupp}).

\subsection{Typability Preservation}\label{sec:Correctness:IntrinsicTyping}

We employ \textit{intrinsic typing}, meaning that the terms in
\lstinline{Tm} are parametrized over the input and output type.
The use of intrinsic typing means that the typing rules are directly integrated into the syntax declaration and hence only well-typed terms can be constructed.

So, a program that transforms a pair of values (from a base type \lstinline{b}) into a single value would have type
\lstinline{Tm (prod b b) b}.
Below in the left-hand listing, we sketch a small excerpt of the definition of \lstinline{Ty} and \lstinline{Tm} (corresponding to object types and expressions from Fig.~\ref{fig:syntax}), eliding some type parameters. \\
\begin{minipage}[t]{0.55\textwidth}
\begin{lstlisting}
inductive Ty : Type
  | base : Type | prod : Ty → Ty → Type | ...
inductive Tm : Ty → Ty → Type (u+1)
  | id : Tm A A | seq : Tm A B → Tm B C → Tm A C | ...
\end{lstlisting}
\end{minipage}
\begin{minipage}[t]{0.45\textwidth}
\begin{lstlisting}
def Tm.denote :
  Tm A B → A.toCS.V → B.toCS.V
def Tm.incr (e : Tm A B) :
  ConsIncr A.toCS B.toCS e.denote
\end{lstlisting}
\end{minipage}

\begin{lstlisting}
structure ConsIncr (A B : ChangeStructure) (f : A.V → B.V) : Type 1 where
  C : Type
  i : A.V → B.V × C
  d : A.D → C → B.D × C
  h₀ : ∀ x, (i x).1 = f x
  h₁ : ∀ x x', f (x + x') = f x + (d x' (i x).2).1
  h₂ : ∀ x x', (d x' (i x).2).2 = (i (x + x')).2
\end{lstlisting}

The right-hand listing above declares the denotational semantics and incrementalization. The declaration ensures that the denotation and incrementalization both have the correct type for the term \lstinline{e}.
Using intrinsic types, ensures that both \lstinline{Tm.denote} and \lstinline{Tm.incr} are well-defined total functions, which is an essential property to simplify further reasoning and proofs.

Note that \lstinline{toCS : Ty → ChangeStructure} is the denotational semantics for types, associating each type with a change structure (see more in Section~\ref{sec:Correctness:Changes}),
and \lstinline{ConsIncr} is a cached incrementalization of a function along with a consistency proof, which we discuss further in Section \ref{sec:corrst}.

\subsection{Containers and Change Structures}\label{sec:Correctness:Changes}

Our development is parametric over a container, a base type, and a set of domain-specific operations. We now describe the properties these three parameters need to satisfy to ensure correct incrementalization.
(i) As mentioned in Section~\ref{sec:if}, a container consists of a type of shapes and a function mapping each shape to a type of indices.
In the Lean formalization, we also require that each shape's index type has decidable equality. This is not a difficult requirement, as most containers indices will have decidable equality of indices, to allow to distinguish the positions of their elements. %
(ii) The base type forms a change structure containing the types of values and changes, an update, difference and a proof of completeness.
(iii) For adding further data-type-specific operations, the syntax type \lstinline{Tm} contains an operation \lstinline{op} with the type signature \lstinline{op : Op A B → Tm A B}.
An element of type \lstinline{Op} is an arbitrary incrementalized operation from \lstinline{A} to \lstinline{B}.
It is defined by giving a function \lstinline{f} from \lstinline{A} to \lstinline{B} together with a correct incrementalization of \lstinline{f}.

These three parameters are defined below.

\noindent\begin{minipage}[t]{.33\textwidth}
\begin{lstlisting}
structure Container where
  S : Type
  shape : S → Type
  [h: ∀ s, DecidableEq (shape s)]
\end{lstlisting}
\end{minipage}%
\begin{minipage}[t]{.36\textwidth}
\begin{lstlisting}
structure ChangeStructure where
  (A : Type)  (A' : Type)
  update : A → A' → A
  diff : A → A → A'
  complete : update x (diff y x) = y
\end{lstlisting}
\end{minipage}%
\begin{minipage}[t]{.3\textwidth}
\begin{lstlisting}
structure Op (A B : Ty) where
  f : A.toCS.V → B.toCS.V
  i : ConsIncr A.toCS B.toCS f
\end{lstlisting}
\end{minipage}

To talk about changes, we need to equip every object type with an algebraic structure which allows us to apply updates.
Hence, our first lemma is that if the given base types form a change structure,
our object types (Figure~\ref{fig:inter}, Figure~\ref{fig:tmsem}) will also form a change structure:

\begin{lemma}\quad\\
If the semantics of the base types $~\den{\src{b}}$ form a change structure,
then the semantics of all the object types $\den{A}$ (types constructed from base types, containers, products and sums, Fig.~\ref{fig:syntax}) also form a change structure.
\end{lemma}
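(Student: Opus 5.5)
The plan is a structural induction on the object type $A$, following the grammar $A ::= \src{b} \mid \src{F}_s~A \mid A \times B \mid A + B$ of Figure~\ref{fig:syntax}. For each type former we build the change structure on $\den{A}$ with change type $\den{A}'$ (Figure~\ref{fig:tmsem} and Figure~\ref{fig:sumcs}). We must give an update $\oplus$ and a difference $\ominus$, and then check completeness. We use completeness in the orientation of the Lean \lstinline{ChangeStructure}, namely $x \oplus (y \ominus x) = y$; this is the version that makes $\ominus$ produce a change from $x$ to $y$. The base case $\src{b}$ is exactly the assumed change structure on $\beta$, so nothing needs proving there.

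For containers $\src{F}_s~A$ we define everything pointwise: $(x \oplus x')~i := x~i \oplus x'~i$ and $(y \ominus x)~i := y~i \ominus x~i$. Completeness at each index $i$ follows from the induction hypothesis for $A$, and function extensionality (which the paper explicitly allows) turns this into equality of the two containers. Products are handled componentwise in the same way, with completeness following from the two induction hypotheses and pair extensionality.

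The only case with genuine content is sums $A + B$. Here we take the change type from Figure~\ref{fig:sumcs}: a change is either an in-place change on the left, an in-place change on the right, or a replacement by a new value of $A + B$. The update is defined as follows. An in-place left change $a'$ sends $\iota_1~a$ to $\iota_1~(a \oplus a')$, and symmetrically for in-place right changes. A replacement $v$ sends any value to $v$. When an in-place change does not match the side of the value, the update returns the value unchanged; any fixed choice is acceptable there, since completeness never exercises it.

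The difference $y \ominus x$ is defined by cases on $(x, y)$. If both are on the same side, say $\iota_1~a$ and $\iota_1~b$, we return the in-place change $b \ominus a$, and similarly on the right. If the sides differ, we return the replacement change $y$. Completeness is then checked case by case. The same-side cases reduce to the induction hypothesis for $A$ or $B$. The mixed cases hold definitionally, because replacement discards the old value. (One could define $\ominus$ to always return a replacement, which makes completeness trivial, but then the in-place changes would be pointless; the case split above is the intended choice.)

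I expect the main obstacle to be this sum case, and specifically the Lean bookkeeping rather than the mathematics. The difficulty is arranging the dependent pattern matching so that the update on mismatched in-place changes is total, and so that the completeness goals reduce by definitional unfolding once we split on both $x$ and $y$. My first attempt would be to state completeness as a separate lemma and prove it by \lstinline{cases} on both arguments followed by \lstinline{simp} with the induction hypotheses. Finally, the construction is packaged as a recursive function \lstinline{toCS : Ty} $\to$ \lstinline{ChangeStructure}, so the lemma holds by construction, with the induction carried out by the recursion itself.
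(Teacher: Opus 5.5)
Your proposal is correct and follows the paper's proof. Both argue by structural induction on $A$, realized in Lean as the recursive function \texttt{toCS}, using pointwise and componentwise change structures for containers and products, and for sums the in-place/replacement changes of Figure~\ref{fig:sumcs} with completeness checked by cases; the only difference is that the paper's sum change type also includes a $\mathsf{null}$ change, which is used later to incrementalize $e_1~\src{||}~e_2$ but plays no role in this lemma.
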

\begin{proof}
By induction on $A$.
\end{proof}

Our Lean mechanization of the proof is constructive: We define a function \lstinline{toCS} that maps every type to a change structure. An element of type \lstinline{ChangeStructure} is a tuple consisting of $(A, A', \oplus, \ominus)$, as well as the $\mathsf{complete}$ law: $x \oplus (y \ominus x) = y$.

\subsection{Consistent Incrementalizations are \Sound{}}
\label{sec:corrst}

We now formally state what it means for a cached incrementalization to be correct. First, we state three \emph{consistency} laws that an incrementalization should fulfill. We then define a \emph{\soundness{}} property formalizing the idea that an incrementalization should produce the same results as re-evaluation when applied to a list of changes. We then show that consistency implies \soundness{}.

In the following, we assume $A$ and $B$ to be change structures.
For a function $f : A \to B$, we recall the functions produced by incrementalization and state the three laws they have to satisfy.

\begin{definition}[Incrementalization]\label{def:incrementalization}\quad\\
The tuple $(C: \mathsf{Type},~ i: A \to B \times C,~ d: A' \to C \to B' \times C)$ is a \emph{consistent incrementalization} of $f : A \to B$ if
\begin{align}
\forall x.    && i_1~x      &= f~x                    \tag{Law-1} \label{def:law1}\\
\forall x~x'. && f~(x \oplus x')     &= f~x \oplus d_1~x'~(i_2~x)  \tag{Law-2} \label{def:law2}\\
\forall x~x'. && i_2~(x \oplus x') &= d_2~x'~(i_2~x). \tag{Law-3} \label{def:law3}
\end{align}
\end{definition}
We denote the first component of the function result by $i_1$ and $d_1$, and the second component by $i_2$ and $d_2$.
\ref{def:law1} requires the first component of the initialization $i~x$ to be equal to the original function $f~x$. %
\ref{def:law2} requires the incrementalization $d_1$ to be a discrete derivative for $f$. This means that, if $d_1$ is given a cache for $x$ and a change $x'$, it will return the difference between $f~x$ and $f~(x+x')$.
\ref{def:law3} requires the incrementalization $d_2$ to also maintain the cache correctly. This means that, if we initialize the cache from an input $x$ and then apply an update $x'$, the new cache equals the one we would have gotten by initializing from $x + x'$. %

We now state the main property that should hold for our incrementalization, namely \soundness{}. Intuitively, an incrementalization is \emph{\sound{}} if it computes the same result as the original computation, given an arbitrary sequence of changes. 
More precisely, we define a function $\mathsf{iter}$ that applies the incrementalization on a list of input changes. We show that $\mathsf{iter}$'s result is equal to applying the function to the changed input.

The definition of $\mathsf{iter}$ is as follows (where $\mathsf{Incr}~A~B$ stands for the incrementalization tuple type $(C: \mathsf{Type},~ A \to B \times C,~ A' \to C \to B' \times C)$):
\[\begin{array}{lll}
	\mathsf{iter} &:& (I : \mathsf{Incr}~A~B) \to A \to \mathsf{List}~A' \to B \times \ctype{I} \\
	\mathsf{iter}~(C, i, d)~x~[] &:=& i~x \\
	\mathsf{iter}~(C, i, d)~x~(x'::xs') &:=& \mathsf{let}~(y, c_1) := \mathsf{iter}~(C,i,d)~x~xs' \\
	&& \mathsf{let}~(y', c_2) := d~x'~c_1 \\
	&& (y\oplus y', c_2)
\end{array}\]

Note that changes are applied back to front.
To represent full reevaluation, we need to compute the input value after the list of changes has been applied, using the $\mathsf{sum}$ function below.
\[\begin{array}{lll}
\mathsf{sum} &:& A \to \mathsf{List}~A' \to A \\
\mathsf{sum}~x~[] &:=& x \\
\mathsf{sum}~x~(x'::xs') &:=& \mathsf{sum}~x~xs' \oplus x'
\end{array}\]
We can now define the \soundness{} property.

\begin{definition}[\Sound{} Incrementalization]\quad\\
	We call an incrementalization $I : \mathsf{Incr}~A~B$ of $f : A \to B$ \emph{\sound{}} if
	\[\forall (x:A)~(xs':\mathsf{List}~A).~(\mathsf{iter}~I~x~xs')_1 = f~(\mathsf{sum}~x~xs').\]
\end{definition}

Finally, we prove showing incrementalization to be consistent (from Definition~\ref{def:incrementalization}) is sufficient for ensure that an incrementalization is \sound{}.

\begin{lemma}\label{lem:incrementalizationIter}\quad\\
	If $~(C, i, d) : \mathsf{Incr}~A~B$ is a consistent incrementalization of $f : A \to B$, then it is a \sound{} with regard to $f$.
\end{lemma}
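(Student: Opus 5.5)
The plan is to prove, by induction on the list of changes $xs'$, a strengthened invariant that covers both components of $\mathsf{iter}$:
\[\mathsf{iter}~(C,i,d)~x~xs' = i~(\mathsf{sum}~x~xs').\]
In words, iterating the derivative produces exactly the initialization of the fully updated input. The claimed value-preservation property then follows by taking first components and applying \ref{def:law1}: $(\mathsf{iter}~I~x~xs')_1 = i_1~(\mathsf{sum}~x~xs') = f~(\mathsf{sum}~x~xs')$.

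The strengthening is needed because the inductive step must know which cache is threaded through. Inducting on the first component alone would leave the cache $c_1$ unconstrained, and then \ref{def:law2} could not be invoked.

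\textbf{Base case.} For $xs' = []$, both sides are $i~x$ by the definitions of $\mathsf{iter}$ and $\mathsf{sum}$.

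\textbf{Inductive step.} For $xs' = x'::ys'$, set $z := \mathsf{sum}~x~ys'$. By the induction hypothesis, the recursive call $\mathsf{iter}~(C,i,d)~x~ys'$ equals $i~z$, so $y = i_1~z$ and $c_1 = i_2~z$. The result of the step is then $(i_1~z \oplus d_1~x'~(i_2~z),~ d_2~x'~(i_2~z))$, which we compare componentwise with $i~(z \oplus x') = i~(\mathsf{sum}~x~(x'::ys'))$.
\begin{itemize}
\item For the first component, rewrite $i_1~z$ to $f~z$ using \ref{def:law1}. Then $f~z \oplus d_1~x'~(i_2~z) = f~(z\oplus x')$ by \ref{def:law2}, and this equals $i_1~(z\oplus x')$ by \ref{def:law1} again.
\item For the second component, \ref{def:law3} gives $d_2~x'~(i_2~z) = i_2~(z\oplus x')$ directly.
\end{itemize}
Pair extensionality then closes the step.

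The proof has no deep obstacle. The main points of care are choosing the strengthened invariant, rather than inducting on the first component alone, and handling the destructuring $\mathsf{let}$s of $\mathsf{iter}$. In Lean, I would state the invariant as an equation of pairs and rewrite with the induction hypothesis before unfolding the $\mathsf{let}$. I would also make sure the list order matches the back-to-front convention shared by $\mathsf{iter}$ and $\mathsf{sum}$; since both recurse on the tail first, it does.
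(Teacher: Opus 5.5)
Your proof is correct and matches the paper's: it also proves the strengthened invariant $\mathsf{iter}~I~x~xs' = i~(\mathsf{sum}~x~xs')$ by induction on $xs'$, then concludes with Law-1. Your componentwise inductive step (Law-1 and Law-2 for the output, Law-3 for the cache) fills in the details the paper leaves implicit.
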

\begin{proof}
We show the following, stronger, property by induction on $xs'$:
\[\forall x~xs'.~\mathsf{iter}~I~x~xs' = i~(\mathsf{sum}~x~xs').\]
The lemma follows by the first consistency law ($\forall x.~i_1~x = f~x$).
\end{proof}

\subsection{Incrementalization and Combinators Are \Sound{}} %
\label{sec:incrsound}

We now prove that our incrementalization function and the incrementalization combinators produces consistent, hence \sound{}, incrementalizations.

We first prove that the result of the incrementalization function fulfills the consistency laws.

\begin{lemma}\label{lem:incrementalizationTuple}\quad\\
	For an expr. $e$, the tuple $(\ctype{\den{e}}, \init{\den{e}}, \deriv{\den{e}})$ is a consistent incrementalization of $\den{e}$.
\end{lemma}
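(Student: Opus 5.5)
We proceed by structural induction on the (intrinsically typed) expression $e$, following the clauses of Figure~\ref{fig:incrRules}. The proof is organized so that most cases reduce to lemmas about the combinators. We first establish once and for all that each combinator in Figure~\ref{fig:incrComb} yields a consistent incrementalization whenever its side conditions hold. Then every leaf case of the induction is discharged by checking those side conditions for the particular operation.

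\textbf{Combinator lemmas.} For \textsc{Self}, the three laws are immediate. \ref{def:law1} holds by definition. \ref{def:law3} is trivial because the cache lives in $\mathsf{Unit}$. \ref{def:law2} is exactly the premise $f~(x\oplus x') = f~x \oplus d~x'$. \textsc{Lin} is the special case $d=f$. For \textsc{Triv}, the cache is the input, so \ref{def:law3} is definitional. \ref{def:law2} follows from the completeness law of the output change structure, which is available for every object type by the earlier lemma that all $\den{A}$ form change structures. \textsc{Add} and \textsc{BiLin} need short calculations using the assumed associativity and commutativity. For \textsc{Add} this is $(x\oplus x')\oplus(y\oplus y') = (x\oplus y)\oplus(x'\oplus y')$. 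For \textsc{BiLin} one expands $f(x\oplus x', y\oplus y')$ twice by the two linearity hypotheses and regroups.

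\textbf{Leaf cases.} For each $f$ handled by $\textsc{Self}^\diamond$, we must show $\den{f}~(x\oplus x') = \den{f}~x \oplus \den{f}~x'$ at the lifted change structures. For $\src{id}$, $\src{dup}$, $\src{fst}$, $\src{snd}$, $\src{inl}$ and $\src{inr}$ this is by unfolding. The container operations $\src{zip}$, $\src{tp}$, $\src{get}~i$, $\src{reshape}~r$ and $\src{replicate}~s$ are handled by function extensionality, since $\oplus$ on $\src{F}_s~A$ is pointwise. $\src{set}~i$ and $\src{filter}~p$ are handled by a case split on the decidable equality of positions, respectively on $p~i$; this is where $\mathsf{deceq}$ is used. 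For $\src{cst}~c$ we need $c = c\oplus 0$, where the nil change of the appropriate type plays the role of $0$; here I would interpret $\den{\src{cst}~0}$ as producing $c\ominus c$ and use completeness. The $\src{+}$ case is the \textsc{Add} lemma. The $\src{op}~\ext{o}$ case is given directly by the consistency proof packed into $\mathsf{Op}$.

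\textbf{Inductive cases.} For $f~\src{;}~g$, \ref{def:law1} follows from the two induction hypotheses. For \ref{def:law2}, we rewrite $\den{f}(x\oplus x')$ using \ref{def:law2} for $f$ and then apply \ref{def:law2} for $g$ at the input $\den{f}~x = (\init{\den{f}}~x)_1$ with the change $y'$. This requires the cache passed to $g$ to be $(\init{\den{g}}~(\den{f}~x))_2$, which holds by construction together with \ref{def:law1} for $f$. \ref{def:law3} is a combination of \ref{def:law3} for $f$ and for $g$, together with \ref{def:law2} for $f$: the latter identifies the new input of $g$ as $\den{f}~x \oplus y'$. Parallel composition works componentwise in the same way. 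For $\src{map}_s~f$, we argue pointwise via function extensionality, applying the hypotheses for $f$ at each index $i$.

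\textbf{Main obstacle.} I expect the hardest part to be the sum constructs ($\src{||}$, $\src{fuse}$, $\src{distr}$), whose incrementalizations live in the appendix. The change structure on sums allows a change to either stay on the same side or replace the value. Consequently the cache must track which side is active, and a side switch forces re-initialization of the sub-cache from the new value. For these cases I would do a four-way case split on the side of the old value and the kind of change. In the replacing case, \ref{def:law3} must hold by re-running $\init{}$ of the branch on the new value; in the same-side case, the induction hypotheses apply as for parallel composition. A secondary nuisance is that in Lean the casts between $\den{A}$ and $\den{A}'$ arising from the hypotheses $A = A'$ (in \textsc{Lin}, \textsc{Add}, \textsc{BiLin}) must be transported carefully.
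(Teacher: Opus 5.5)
Your proposal is correct and follows the same route as the paper: structural induction on $e$, mirroring the structural recursion that defines \texttt{Tm.incr}. You discharge the primitive cases by checking the side conditions of \textsc{Self} and \textsc{Add}, take the $\src{op}~\ext{o}$ case from the consistency proof packed into $\mathsf{Op}$, obtain $\src{;}$, $\src{\times}$, $\src{map}$ and $\src{||}$ from the induction hypotheses, and treat $\src{fuse}$ and $\src{distr}$ by direct case analysis.

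Two small remarks. First, the \textsc{Triv}, \textsc{Lin} and \textsc{BiLin} lemmas are not needed for this statement; they belong to Theorem~\ref{thm:CorrectCombinators}. Second, the sum cases must split over all five change constructors rather than just ``same side'' versus ``replacing''; the extra mismatched local changes and $\mathsf{null}$ go through directly from the definition of $\oplus$ together with completeness.
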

\begin{proof}
By induction on $e$.
\end{proof}

In our Lean mechanization, this means that \lstinline{Tm.incr} has the following signature (which we already showed above), ensuring that it always returns a correct incrementalization:
\begin{lstlisting}
def Tm.incr : (e:Tm A B) → ConsIncr A.toCS B.toCS e.denote
\end{lstlisting}
\Soundness{} follows immediately from this signature.
The function \lstinline{Tm.incr} is defined by structural recursion on the syntax. 
To ensure that the consistency laws from Definition~\ref{def:incrementalization} are met, we prove them individually for each language construct.
As presented in Figure~\ref{fig:incrRules}, our formalization uses our incrementalization combinators.
Note that whenever we use one of our combinators, we have to prove that all its preconditions are met.
Consider for instance the $\src{id}$ operation (Figure~\ref{fig:incrRules}), which we incrementalize with our \textsc{Self} combinator (Figure~\ref{fig:incrComb}).
Here, we have to prove \textsc{Self}'s precondition $\den{\src{id}}~(x \oplus x') = \den{\src{id}}~x \oplus \den{\src{id}}~x'$, which clearly reduces to the identity.

\begin{theorem}[\thmIncrSoundName]\label{thm:ConsIncrementalization}\quad\\
	For an expression $e$, the tuple $(\ctype{\den{e}}, \init{\den{e}}, \deriv{\den{e}})$ is a \sound{} incrementalization of $\den{e}$.
\end{theorem}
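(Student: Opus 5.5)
The theorem follows by combining the two preceding lemmas. By Lemma~\ref{lem:incrementalizationTuple}, for every well-typed expression $e$ the tuple $(\ctype{\den{e}}, \init{\den{e}}, \deriv{\den{e}})$ is a consistent incrementalization of $\den{e}$. By Lemma~\ref{lem:incrementalizationIter}, every consistent incrementalization is \sound{}. Chaining the two gives $(\mathsf{iter}~\den{e}'~x~xs')_1 = \den{e}~(\mathsf{sum}~x~xs')$ for all $x$ and $xs'$. In the Lean mechanization this step is immediate: \lstinline{Tm.incr e} has type \lstinline{ConsIncr A.toCS B.toCS e.denote}, so its fields \lstinline{h₀}, \lstinline{h₁}, \lstinline{h₂} are exactly the hypotheses that Lemma~\ref{lem:incrementalizationIter} needs.

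Spelling out the argument behind the two lemmas, I would proceed in two stages. First, I would prove Lemma~\ref{lem:incrementalizationIter} by induction on $xs'$ with the strengthened invariant $\mathsf{iter}~I~x~xs' = i~(\mathsf{sum}~x~xs')$.
\begin{itemize}
\item The nil case holds by definition of $\mathsf{iter}$.
\item In the cons case, let $s = \mathsf{sum}~x~xs'$. The induction hypothesis gives $\mathsf{iter}~I~x~xs' = (f~s, i_2~s)$, using \ref{def:law1} for the first component. The result of $\mathsf{iter}$ on $x'::xs'$ is then $(f~s \oplus d_1~x'~(i_2~s),~d_2~x'~(i_2~s))$. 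By \ref{def:law2} and \ref{def:law3} this equals $(f~(s\oplus x'), i_2~(s \oplus x'))$, which is $i~(\mathsf{sum}~x~(x'::xs'))$ after one more use of \ref{def:law1}.
\end{itemize}
Projecting the invariant to its first component and applying \ref{def:law1} once more gives \soundness{}.

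Second, I would prove Lemma~\ref{lem:incrementalizationTuple} by structural induction on $e$, splitting the constructors into three groups.
\begin{itemize}
\item \emph{Combinator-based constructors} ($\src{id}, \src{dup}, \src{zip}, \src{get}~i, \src{reshape}~r$, $\src{cst}~c$, $\src{+}$, and so on). Their incrementalizations come from \textsc{Self} or \textsc{Add}. I would first prove once that each combinator in Figure~\ref{fig:incrComb} yields a consistent incrementalization whenever its side condition holds. For \textsc{Self}, Law-1 and Law-3 are trivial because the cache is $\star$, and Law-2 is exactly the precondition. Each such constructor then reduces to checking that side condition pointwise, using the componentwise definition of $\oplus$ on products and containers.
\item \emph{User operations} $\src{op}~\ext{o}$. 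These carry their own consistency proof, so there is nothing to show.
\item \emph{Compound constructors} $f\src{;}g$, $f\src{\times}g$, and $\src{map}~f$. For sequential composition, Law-3 for $f$ rewrites $g$'s input change from $d_1$ into the form $f~(x\oplus x')$, after which the laws for $g$ apply. Parallel composition and $\src{map}$ are handled componentwise and indexwise, respectively, using function extensionality.
\end{itemize}

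I expect the main obstacle to be the sum-type constructors $\src{distr}$, $\src{fuse}$, and $\src{||}$. The change structure on sums allows a change to switch sides by replacing the value outright. In $e_1\src{||}e_2$, a change that switches sides forces a re-initialization of the other branch's cache. The proof must then case-split on the side of the old value and on the kind of change, and in the switching case use $\mathsf{complete}$ to show that the replacement output change is correct.
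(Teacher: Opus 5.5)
Your proposal is correct and takes the same approach as the paper: the theorem is obtained by chaining Lemma~\ref{lem:incrementalizationTuple} with Lemma~\ref{lem:incrementalizationIter}, and your sketch of the latter (induction on $xs'$ with the invariant $\mathsf{iter}~I~x~xs' = i~(\mathsf{sum}~x~xs')$) is the paper's argument. There are two small mislabels in your sketch, neither of which affects correctness. In the sequential-composition case, it is Law-2 for $f$, not Law-3, that rewrites $g$'s new input $f~x \oplus d^f_1~x'~(i^f_2~x)$ to $f~(x \oplus x')$; Law-3 for $f$ handles the $f$-component of the cache. For $\src{||}$, $\mathsf{complete}$ is needed in the same-side replacement case ($\mathsf{sl}$ on an $\iota_1$ value yields $\mathsf{cl}~(y \ominus y_0)$), whereas the side-switching case outputs $\mathsf{sl}~y$ directly.
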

\begin{proof}
	From Lemma~\ref{lem:incrementalizationTuple} and Lemma~\ref{lem:incrementalizationIter}.
\end{proof}

\begin{theorem}[\thmCombSoundName]\label{thm:CorrectCombinators}\quad\\
The combinators \textsc{Triv}, \textsc{Self}, \textsc{Lin}, \textsc{BiLin}, \textsc{Add} are \sound{}.
\end{theorem}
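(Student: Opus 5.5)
By Lemma~\ref{lem:incrementalizationIter}, every consistent incrementalization is \sound{}. So the plan is to show, for each combinator separately, that the tuple it returns satisfies \ref{def:law1}--\ref{def:law3} for its function $f$. Value preservation then follows immediately. In Lean this means building a \lstinline{ConsIncr} value for each combinator and discharging the three proof fields $h_0, h_1, h_2$ after unfolding the definitions in Figure~\ref{fig:incrComb}. Each case splits into the three laws, and all but one are essentially definitional.

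\textsc{Triv}: the cache is the input, with $i~x = (f~x, x)$ and $d~x'~x = (f~(x\oplus x')\ominus f~x,~x\oplus x')$. \ref{def:law1} holds by reflexivity, and \ref{def:law3} reduces to $x\oplus x' = x\oplus x'$. For \ref{def:law2} we must show $f~x \oplus (f~(x\oplus x')\ominus f~x) = f~(x\oplus x')$. This is exactly the $\mathsf{complete}$ law of the output change structure, in its Lean form $\mathsf{update}~x~(\mathsf{diff}~y~x) = y$, instantiated at $y := f~(x\oplus x')$. That law holds for every object type by the lemma showing that all types form change structures.

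\textsc{Self}: \ref{def:law1} is immediate, \ref{def:law2} is literally the precondition $f~(x\oplus x') = f~x\oplus d~x'$, and \ref{def:law3} is trivial because the cache is $\mathsf{Unit}$. \textsc{Lin} is handled the same way, using its precondition $f~(x\oplus x') = f~x\oplus f~x'$. Here the initializer must return $(f~x,\star)$ for \ref{def:law1} to hold, and I would read the figure's $\lambda x.(x,\star)$ that way. In Lean, the hypotheses $A = A'$ and $B = B'$ have to be used to transport $f$ to a function on changes, and I would rewrite along these equalities (or state them as definitional instances) before invoking the precondition.

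\textsc{Add} has unit cache, so only \ref{def:law2} needs work. It requires $(x\oplus x')\oplus(y\oplus y') = (x\oplus y)\oplus(x'\oplus y')$, which follows by a short chain of associativity and commutativity rewrites.

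\textsc{BiLin} is where I expect the real work. \ref{def:law1} is trivial, and \ref{def:law3} reduces to $(x\oplus x', y\oplus y') = (x\oplus x', y\oplus y')$. For \ref{def:law2}, I would first expand with linearity in the first argument:
\[f(x\oplus x', y\oplus y') = f(x,y\oplus y')\oplus f(x',y\oplus y').\]
Then I would expand each summand with linearity in the second argument. This yields
\[(f(x,y)\oplus f(x,y'))\oplus(f(x',y)\oplus f(x',y')).\]
Finally I would reassociate and commute, using the assumed associativity and commutativity on $C$, into the target $f(x,y)\oplus\bigl(f(x',y')\oplus f(x',y)\oplus f(x,y')\bigr)$.

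The main obstacle is this bookkeeping of four summands under the type casts $C = C'$, which keep the rewrite tactics from seeing through $\oplus$. My first attempt would be to prove a small helper lemma that normalizes four-term sums in a commutative semigroup. Alternatively, I would package the associativity and commutativity hypotheses as a Lean \lstinline{CommSemigroup} instance and close the goal with \lstinline{ac_rfl}.
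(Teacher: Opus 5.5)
Your proposal is correct and matches the paper's proof. The paper also establishes the three consistency laws for each combinator by equational reasoning and then concludes \soundness{} via Lemma~\ref{lem:incrementalizationIter}; the ingredients are the $\mathsf{complete}$ law for \textsc{Triv}, the preconditions for \textsc{Self} and \textsc{Lin}, and associativity/commutativity together with bilinearity for \textsc{Add} and \textsc{BiLin}. Your two corrections are also right: reading the \textsc{Lin} initializer as $\lambda x.(f~x,\star)$, and using $\mathsf{complete}$ in the form $x \oplus (y \ominus x) = y$ rather than the version misprinted in Figure~\ref{fig:inter}.
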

\begin{proof}
We show consistency for each combinator using equational reasoning, from which \soundness{} follows by Lemma~\ref{lem:incrementalizationIter}.
\end{proof}

In Lean, each incrementalization combinator's consistency is expressed similarly to \lstinline{Tm.incr}, namely by having them return values of type \lstinline{ConsIncr}. More precisely, each combinator takes a function \lstinline{f : A.V → B.V} (and some additional data, corresponding to the premises in Figure~\ref{fig:incrComb}) and returns a value of type \lstinline{ConsIncr A B f}.

\subsection{Finite Support}
\label{sec:finsupp}

The above results ensure the extensional correctness of our approach.
In this subsection, we want to discuss an issue relevant for implementations
that use finite data structures like hash maps to represent containers:
If some shapes of the container have infinitely many indices, we need to ensure that
our program only produces containers with \textit{finite support}, i.e., where only a
finite number of indices are mapped to non-zero values.
Accordingly, we give theorems stating under which conditions our built-in
operations preserve the finite-support-property.
One common condition for all finite-support properties is the existence of a default value (called $\varepsilon$ in the following) for each the type a container is applied to. This $\varepsilon$ is used to represent the absence of a value.

To formalize the notion of finite support preservation, we define finitely supported
maps as functions $f : A \to B$ for which there is an $l : \mathsf{List}~A$ such that for all
$k \notin l$, $f~k = \varepsilon$.
We then show that the operators returning containers preserve finite support. We do this by applying their denotational
semantics to finitely supported maps, thereby proving that each operation returns a finitely supported map.

\begin{theorem}[Finiteness Preservation]\label{thm:FinPres}\quad\\
The operators $\src{set}$, $\src{replicate}$, $\src{map}$, $\src{reshape}$,
$\src{filter}$, $\src{zip}$, $\src{tp}$ preserve finite support of containers, under the following preconditions:
(i) For $\src{map}~f$, we require $\den{f}~\varepsilon = \varepsilon$.
(ii) For $\src{replicate}~x$, we require that $x = \varepsilon$.
(iii) For $\src{reshape}~r$, we require that for each index $r~i$ in the input,
there only finitely many indices $j$ in the output such that $r~j = i$.
\end{theorem}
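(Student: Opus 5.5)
The plan is a case-by-case argument, one case per operator, each discharged directly from the denotational semantics in Figure~\ref{fig:tmsem}. A container $x : \den{\src{F}_s~A}$ is finitely supported when there is a witness list $l : \mathsf{List}~\pos{s}$ with $x~k = \varepsilon$ for all $k \notin l$. For each operator we take the witness lists of the (finitely supported) inputs and build an explicit witness list for the output. Then we check pointwise that every index outside it is mapped to $\varepsilon$. This mirrors how we would do it in Lean: a lemma per operator, taking inputs bundled with their support lists and returning the output bundled with its list.

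The easy cases go first.
\begin{itemize}
\item For $\src{set}~i$ applied to $(x,a)$ with $a$ supported on $l$, we use the witness $i :: l$. For $j \notin i::l$ we have $j \neq i$, so the output is $a~j = \varepsilon$; decidable equality of indices is what makes the $\mathsf{if}$ meaningful.
\item For $\src{replicate}$ the output is $\lambda i.~x$. Since $x=\varepsilon$ by precondition (ii), the empty list is a witness.
\item For $\src{map}~f$ with input support $l$, we reuse $l$: for $k\notin l$ the output is $\den{f}~(x~k) = \den{f}~\varepsilon = \varepsilon$ by precondition (i).
\item For $\src{zip}$ the output at $k$ is $(x~k, y~k)$. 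We take the default of a product to be $(\varepsilon,\varepsilon)$ and use the concatenation $l_x \mathbin{+\!\!+} l_y$ as the witness.
\item For $\src{filter}~p$ on $(x,a)$, the output at $i$ is either $a~i$ or the filler $x$. Here we must read the operator as filtering with the default, i.e.\ take $x=\varepsilon$ (the paper describes it as zeroing entries out). Then the support of $a$ suffices.
\end{itemize}

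Next comes $\src{tp}$. The input is a nested container whose outer support is $l$ and whose inner elements $x~j$ each have support $l_j$; the default of $\src{F}_{s_2}~A$ is taken to be the constant-$\varepsilon$ container, so that $x~j = \varepsilon$ pointwise for $j \notin l$. The output $\lambda i.\lambda j.~x~j~i$ then has outer witness $\bigcup_{j\in l} l_j$, obtained by concatenating over $l$. Each inner container $\lambda j.~x~j~i$ has witness $l$. The required fact is $x~j~i=\varepsilon$ whenever $i\notin l_j$ or $j\notin l$.

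The main obstacle I expect is $\src{reshape}~r$, since we need to compute preimages. By precondition (iii), for each input index $i$ the fibre $\{j \mid r~j = i\}$ is given by a finite list $L_i$; we should state the precondition constructively as such a list. The output witness is then the concatenation of $L_i$ over $i \in l$. If $j$ is not in it, then $r~j \notin l$, since otherwise $j \in L_{r~j}$, and therefore $x~(r~j) = \varepsilon$. The subtle part is formulating (iii) so that it yields a list rather than a mere finiteness proposition. The second subtle point is consistency of the chosen defaults $\varepsilon$ across product and nested-container types, and I would fix these once, by induction on types, before handling the cases.
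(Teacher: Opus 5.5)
Your proposal is correct and takes the same route as the paper: an operator-by-operator check that applies each denotational semantics to finitely supported maps and exhibits an explicit witness list for the output. Your two added readings are not deviations but assumptions the informally stated theorem needs in order to hold: that the fill value of $\src{filter}$ must be $\varepsilon$, and that $\src{tp}$ needs its inner containers to be finitely supported.
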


The conditions in the theorem prevent the operations from setting infinitely many non-zero values.
Our relational algebra and tree use cases fulfill these requirements, as we do not define
operations that create infinite relations out of finite ones. The linear algebra and CRDT use cases are not bound by these preconditions as they only use finite data structures.

%% file: implementation.tex
The mechanization of \Lang{} in Section~\ref{sec:correct} is intended to verify our correctness claims, but is not directly usable for practical use cases. In particular, the use of functions to represent data structures makes it difficult to reason about performance. We therefore present an implementation that uses a representation based on hash maps and also provides better ergonomics for writing instances.

Our implementation is based on tagless-final style~\cite{Carette09}. 
This means that rather than defining the abstract syntax as an inductive data type, we define them as a type class, where each function corresponds to a constructor. 
This allows us to define an interpreter by instantiating this type class.

\subsection{Efficient Representation}

We keep our library implementation as close as possible to the formalization.
However, to ensure that all operations are efficiently computable, our implementation differs slightly from our calculus in several details.
Importantly, none of these affect its soundness.

Most notably, while our formalization represents data structures as functions, our library uses hash maps. 
This allows eager computation of the structure's values and enables benefiting from sparsity. 
The latter is especially important when dealing with changes, which are often sparse (e.g., when we only change a few tuples in a relation). We can ensure that sparse changes are applied efficiently by updating data structures in place. In our implementation, this is enabled by Lean's functional-but-in-place semantics~\cite{Reinking20}, which allows destructive updates for linearly used data structures while maintaining purely functional behavior. In other languages, one could
instead use purely functional data structures supporting efficient insertions~\cite{Okasaki99}.

As discussed in Section~\ref{sec:finsupp}, we need to be careful when representing infinite data structures
as (finite) hash tables. Our implementation assumes that the preconditions mentioned in Section~\ref{sec:finsupp} are respected by the programmer (this is not needed when dealing with finite data structures like arrays).

Finally, the formalization requires the indices of the container to have decidable equality, while the implementation requires Boolean equality and a hash function\footnote{The difference between decidable equality and Boolean equality in Lean is that the former needs to provide a proof of equality or inequality, while the latter is just a Boolean-valued function.}.

It is important to note that our correctness argument from Section~\ref{sec:correct} also applies to our implementation.
(i)~The difference between requiring decidable equality and Boolean equality for the indices is also unproblematic for correctness.
This is the case, since the existence of a Boolean-valued equality function follows straightforwardly from the existence of decidable equality.
(ii)~Hash maps are clearly a valid implementation of finitely-supported functions.
Hence, our correctness argument from Section~\ref{sec:correct} also holds for our implementation.

\subsection{Syntactic Sugar}\label{sec:frontend}

While \Lang{} is primarily a formal calculus, we add let-bindings as syntactic sugar.
The core idea is that let-bindings $\mathsf{let}~x=e_1;~e_2$ can be translated into existing \Lang{} combinators,
by duplicating the context, passing one copy to $e_1$, and then passing the result of $e_1$ along with the other copy to $e_2$:
\[\den{\mathsf{let}~x=e_1;~e_2} = \src{dup~;~} (\den{e_1}~\src{\times~id})~\src{;}~\den{e_2} \]

The translation of named variables takes two steps
from named variables to de Bruijn variables, and then from de Bruijn to \Lang{}'s combinators.
The first step translates named variables to de Bruijn indices.
The type context of a named term is initially a list of variable-type pairs.
They become de Bruin contexts by stripping away the variable names.
For example, a named context $x:A,~y:B$ is translated to a de Bruijn context $A, B$.
Correspondingly, variable occurrences in terms are translated to indices by looking up their position in the context.
Hence, $x:A, y:B \vdash~ y : B$ is translated to $A,B \vdash~ \overline{1} : B$,
as $y$ was the second variable in the context and we start counting at 0.

The second step translates de Bruijn terms to combinatorial terms.
This step completely eliminates contexts and replaces them by nested products.
That is, $A, B$ is translated to $A \times B$ and $A,B \vdash e : C$ to $e' : A \times B \to C$.
Correspondingly, it translates de Bruijn variables to projections.
For instance, $A,B \vdash~ \overline{1} : B$ is translated to $\src{snd} : A \times B \leadsto B$.

\paragraph{Example} We give an example how the matrix-vector multiplication and the dense layer
from Section~\ref{sec:ex} looks like using our syntactic sugar.
Here, \lstinline{NTm} represents a named term, \lstinline{#} is function application,
and \lstinline{[m, v]} is used to construct the argument list for named terms like \lstinline{mvmul}.

\begin{lstlisting}
def mvmul : NTm [("m", V n (V m Float)), ("v", V m Float)] (V n Float) :=
  ⟪ map sum # (map2 (map2 mul) # (replicate # v, m)) ⟫
def dense : NTm [("m", V n (V m Float)), ("b", V n Float), ("x", V m Float)] (V n Float) :=
  ⟪ map relu # map2 add # (mvmul # [m, x], b) ⟫
\end{lstlisting}

%% file: application.tex
In this section, we describe the instances we have implemented, applying \Lang{} to linear algebra,
relational algebra, trees, and CRDTs.
We thereby show how \Lang{} enables incrementalization in new areas without reinventing the wheel.
We perform proof-of-principle performance evaluation for linear algebra, relational algebra, and trees.
We discuss the results of the linear algebra example (a dense neural network layer) here. For
the other applications, we only give a brief description and provide more details in Appendix~\ref{sec:evaldetails}.

\subsection{Linear Algebra}

\begin{figure}
\begin{minipage}{.55\linewidth}
\captionsetup*{type=table}
\caption{Linear Algebra operations in \Lang{}.}
\label{tbl:laops}
\small
\centering
\begin{tabular}{@{}rcl@{}} \toprule
Operation & Lin. Alg. & \Lang{} \\ \midrule
Vector sum & $v_1 + v_2$ & $\src{map2~+}$ \\
Matrix sum & $M_1 + M_2$ & $\src{map2~(map2~+)}$ \\
Hadamard product & $v_1 \circ v_2$ & $\src{map2~}\ext{*}$ \\
Dot product & $v_1 \cdot v_2$ & $\src{map2~}\ext{*}~\src{;}~\ext{sum}$ \\
Scalar-vector product & $c \cdot v$ & $\der{svmul}$ \\
Matrix-vector product & $M~v$ & $\der{mvmul}$ \\
Matrix-matrix product & $M_1~M_2$ & $\der{mmmul}$ \\
\bottomrule
\end{tabular}

\quad \\ %
where $\der{svmul} := \src{(replicate}~n\src{ \times id) ; map2~}\ext{*}$, \\
and $\der{mmmul} := \src{replicate}~n \src{\times tp; map2~}\der{mvmul}\src{;tp}$

\end{minipage}\hfill\begin{minipage}{.45\linewidth}

\quad\\

\includegraphics[width=\linewidth]{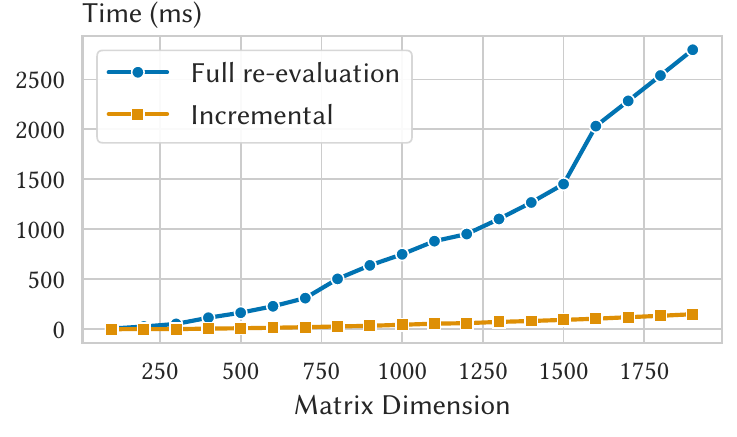}
\caption{Linear Algebra benchmark: Full evaluation vs. incremental update on $\der{dense}$.}
\label{fig:eval}
\end{minipage}
\end{figure}

In Section~\ref{sec:laExtension}, we have presented the implementation of the syntax and denotational semantics of the linear algebra instance.
In this section, we present the incrementalization of the instance's operations and discuss how common linear algebra operations are encoded in \Lang{}. The implementation of the incrementalizations is straightforward, thanks to the use of incrementalization combinators
(\lstinline{Map.sum} sums the values of a container, represented as a map):
\begin{lstlisting}
def mul := triv (λ (x, y) => x * y)
def relu := triv (λ x => if x > 0 then x else 0)
def sum := self Map.sum Map.sum
\end{lstlisting}

In Table~\ref{tbl:laops}, we show how linear algebra operations can 
be expressed in \Lang{} (again we use $\src{map2}~f$ as an abbreviation for $\src{zip; map}~f$).
We assume that $\der{mvmul}$ is defined like in Section~\ref{sec:ex}.

\paragraph{Evaluation}

For a preliminary qualitative assessment of the speed-up provided by our approach, we perform a small benchmark using the dense neural network layer $\der{dense}$ running example defined earlier. We compare the time needed to fully evaluate the dense layer with the time needed to update the output when 1\% of the input elements change. 
The benchmarked input vector sizes range from $n=100$ to $n=2000$ (the size of the weight matrix is always $n \times n$). Every
time measurement is best-of-5\footnote{We ran all experiments single-threaded on a ThinkPad T14 Gen 4 with AMD Ryzen 7 PRO 7840U CPU (\SI{3.3}{GHz}) with Linux kernel 6.12.63 and Lean 4.22.0-rc4.}.
The results shown in Figure~\ref{fig:eval} indicate a clear speed-up of the incrementalization compared to reevaluation. These results are meant as a "proof-of-principle" regarding the viability of our approach; please note that our current implementation uses an unoptimized interpreter. Hence, at the moment, we do not claim to be competitive with state-of-the-art frameworks that use compilation and specialized linear algebra libraries.
Note that the cache needed for incrementalization can be read off from our definition of the incrementalization translation.
Specifically, the cache type for $\der{dense}$ is (ignoring caches of Unit type) $(\mathbb{R}^2)^{n\cdot m} \times \mathbb{R}^n$ for an $m\times n$ matrix, so the cache contains $2 n m + n$ numbers.

We also evaluate how performance declines as changes get bigger (see more in Section~\ref{sec:laeval}).
For our example, incremental evaluation remains faster than reevaluation only as long as at most 70\% of the input changes.
Note that our evaluation uses the same hash-map-based representation for matrices irrespective of the density;
the use of a dense representation would likely mean the cut-off point at which reevaluation becomes faster is lower than 70\%.

\subsection{Relational Algebra}

We follow DBSP~\cite{Budiu23} in representing a database table as a map from tuples to integers, representing the multiplicity of the tuple in the table (negative integers are used to represent the removal of tuples). 
That is, we have as base type $\beta := \mathbb{Z}$ and as the shape type $S$ a collection of schemata. 
Concretely, our implementation defines a schema to be either an integer, a string,  or a pair of two schemata, but this could be changed without affecting the rest of the instance. 
Note the contrast to the linear algebra example: 
While there, the values of the vector were the values of the container, here, the values of the relation are the \textit{indices} of the container. 
More formally, a table with schema $s$ is represented as a map from $\pos{s}$ to $\mathbb{Z}$, where the indices $\pos{s}$ are the values of the relation.
A change then amounts to adding/removing an arbitrary number of tuples to/from the table.

We formally describe the instance in Figure~\ref{fig:relalginst}. The Cartesian product $\ext{\otimes}$ takes two relations with schemata $s$ and $s'$ and returns a relation with the product schema $s,s'$. It is the only instance operation that is not defined using an incrementalization combinator. Its incrementalization can be seen as the discrete derivative of multiplication:
\[ ((a + a') \cdot (b + b')) - (a \cdot b) = a \cdot b' + a' \cdot b + a' \cdot b'. \]

In Table~\ref{tbl:raops}, we show how several relational algebra operations~\cite{Codd70} can be expressed in \Lang{}.
Defining relational algebra as an instantiation of \Lang{} is simpler than defining its incrementalization from scratch for three reasons. First, of the RA operations, only the Cartesian product has to be explicitly defined as an additional operation, while the other can be defined as derived operations in \Lang{} and hence require no additional proof effort. Second, no incrementalizations need to be written by hand, thanks to the use of incrementalization combinators. Third, incrementalization combinators reduce the proof effort, as it is only required to show the bilinearity of $\der{\ext{\otimes}}$ and some trivial properties of integer addition and subtraction.

\begin{figure}
\begin{minipage}{.5\textwidth}
\centering
$\begin{array}{lll}
\ext{*} &:& \src{b \times b} \leadsto \src{b} \\
x~\den{\ext{*}}~y &:=& x \cdot y \\
\den{\ext{*}}' &:=& \textsc{Triv}~\den{\ext{*}} \\[.5em]
\ext{-} &:& \src{b \times b} \to \src{b} \\
x~\den{\ext{-}}~y &:=& x - y \\
\den{\ext{-}}' &:=& \textsc{Lin}~\den{\ext{-}} \\[.5em]
\ext{\otimes} &:& \src{F}_s~\src{b}~\src{\times}~\src{F}_{s'}~\src{b} \leadsto \src{F}_{s,s'}~\src{b} \\
R~\den{\ext{\otimes}}~S &:=& \lambda (i, j).~x~i \cdot y~i \\
\den{\ext{\otimes}}' &:=& \textsc{BiLin}~\den{\ext{\otimes}}
\end{array}$
\caption{Relational algebra instance operations.}
\label{fig:relalginst}
\end{minipage}%
\begin{minipage}{.5\textwidth}
\captionsetup*{type=table}
\caption{Relational Algebra operations in \Lang{}.}
\label{tbl:raops}
\centering
\begin{tabular}{@{}rcl@{}} \toprule
                  Operation & Rel. Algebra & \Lang{} \\ \midrule
Cartesian prod. & $R \times S$ & $\ext{\otimes}$ \\
Join & $R \bowtie_p S$ & $\ext{\otimes}\src{; filter~}p$ \\
Selection & $\sigma_p(R)$ & $\src{filter~}p$ \\
Union & $R \cup S$ & $\src{map2~+}$ \\
Difference & $R - S$ & $\src{map2~}\ext{-}$ \\
Intersection & $R \cap S$ & $\src{map2~}\ext{*}$ \\
\bottomrule
\end{tabular}
\end{minipage}
\end{figure}

\paragraph{Dictionaries}

To define some aggregations like group-by, we use both relations and \textit{dictionaries}.
Dictionaries~\cite{Shaikhha22} are a special case of containers, where the shape type has a single element, and hence the index type is not dependent on the shape.
Specifically, dictionaries for a given key type $K$ can be seen as a functor
mapping $V$ to $K \to V$, where the functions are assumed to be finite.
Hence, for each $K$, we define a container $\src{D}^K$ with a single shape $\star$
and index type $\pos{\star} := K$. Our container semantics means that $\src{D}^K$ applied to a type $A$ denotes $\den{\src{D}^K_\star~A} = (\pos{\star} \to A) = (K \to A)$.

\paragraph{Aggregations}
Our first example of an aggregation is $\ext{count} : \src{F}_s~\src{b} \leadsto \src{b}$, which counts the number of rows
in a relation. It is incrementalized using the \textsc{Self} combinator.
Similarly, we can define summation, maximum, and minimum, which we omit here.
We further have an operation
$\ext{groupBy} : (\pos{s} \to K) \to \src{F}_s~\src{b} \leadsto \src{D}^K_\star~(\src{F}_{s}~\src{b})$,
inspired by DBSP.
In addition to the relation container it uses the dictionary container defined above.\footnote{
using the given type for $\ext{groupBy}$, instead of $\ext{groupBy} : (\pos{s} \to \pos{s'}) \to \src{F}_s~\src{b} \leadsto \src{F}_{s'}~(\src{F}_{s}~\src{b})$ allows grouping by any type $K$, not just types expressible with the schema $s'$.}
With $\ext{count}$ and $\ext{groupBy}$ defined, we can construct the derived operation
$\der{proj} := \ext{groupBy}~f \src{~;~map}~\ext{count}$. We describe a proof-of-principle
evaluation of incremental $\der{proj}$, showing a speed-up over reevaluation, in Appendix~\ref{sec:raeval}.

\subsection{Trees}

The natural approach for representing tree-like structures as containers is as (partial) maps
from paths to scalar values, similar to XPath and JSONPath.
We want changes to allow adding and removing tree nodes.
Our approach represents dynamically-shaped trees
by letting the type of ``shapes'' be the Unit type (similar to our container representation, and to how dynamically-typed languages
can be represented as uni-typed languages), and letting the index set be the set of all possible paths.
The fact that the mapping from paths to values is partial is reflected in the fact that the
type of values has a ``null'' value, which ensures that our operations are still formally
total (but they may produce structures containing null values)

\paragraph{XQuery Example}

We implement an example inspired by Query 1 from the XQuery Use Cases document\footnote{\url{https://www.w3.org/TR/xquery-use-cases}}, which has
also been studied by~\citet{Matsuda23}.
In this example, we query a bibliography for books published by Addison-Wesley after 1991, and return their titles and years.
As alluded to above, our incremental program expects the books to be represented as maps from
paths to scalar values. The conversion of a normal tree into a map and back is implemented in Lean.
The data and query are shown in Appendix~\ref{sec:jsoneval}.

\paragraph{Folds on Rose Trees}

We now describe how we can incrementalize folds on rose trees, which are trees where each node can have an arbitrary number of children.
Similar to the XQuery example, we define rose trees first as an inductive type in Lean, and then implement conversion to maps from paths to scalar values.
A fold then takes a function $f$ which is applied to each path-value pair,
a binary function $m$ which merges the results of $f$ for each path-value pair,
and a neutral element $z$ for $m$.
For the definition to be well-defined and incrementalization to be practical, we further assume that $m$
is associative, commutative and is linear (i.e. its own incrementalization).
We can then define use folds to define aggregations over trees like height, size, maximum, and sum.
Given our assumptions, fold is self-incrementalizable. We evaluate summation over trees with
branching factor 2 and up to depth 20 in Appendix~\ref{sec:roseeval}, showing a speed-up over reevaluation.

\subsection{CRDTs}

State-based conflict-free replicated data types (CRDTs)~\cite{Shapiro11, Almeida25} are
a special kind of replicated data types, whose state
fulfills the semilattice laws.
We demonstrate incrementalization on a classic CRDT example, a global counter (GCounter) shared among network participants $\mathbb{I}$.
The state is a vector of natural numbers, one per node. $\der{inc}$ increments the $i$-th entry, $\der{value}$ sums all entries, and $\der{merge}$ takes element-wise maximum.

\begin{figure}
$S      := \mathsf{Unit} \qquad\qquad
\pos{\star} := \mathbb{I} \qquad\qquad
\beta   := \mathbb{N} \qquad\qquad
+       := +_\mathbb{N} \qquad\qquad
0       := 0_\mathbb{N}$\\[1em]
\noindent
$\begin{array}{lll}
\den{\ext{max}}~x~y &:=& \mathsf{if~}x<y\mathsf{~then~}y\mathsf{~else~}x \\
\den{\ext{max}}'    &:=& \textsc{Triv}~\den{\ext{max}} \\
\den{\ext{sum}}~x   &:=& \Sigma_i (x~i) \\
\den{\ext{sum}}'    &:=& \textsc{Lin}~\den{\ext{sum}}
\end{array}$\qquad\qquad
$\begin{array}{lll}
\der{value}  &:=& \ext{sum} \\
\der{incNat} &:=& \src{\langle id,~cst~1\rangle ; +} \\
\init{\der{inc}}  &:=& \src{\langle \init{get} ; \der{incNat},~id \rangle ; \init{set}} \\
\der{merge}  &:=& \src{map2~}\ext{max} \\
\end{array}$
\caption{GCounter CRDT instance, including derived operations.}
\label{fig:crdt}
\end{figure}

Figure~\ref{fig:crdt} shows the GCounter implementation in \Lang{} with one shape ($\star$), whose index set is the set of participants.
Incremental $\der{value}$ updates the sum without re-iterating the vector, and incremental $\der{merge}$ operates only on sparse changes, reducing computation and bandwidth.
This yields a \textit{delta-state CRDT}~\cite{Almeida18}: normal evaluation behaves as a state-based CRDT (transmitting full state), while incremental evaluation behaves as a delta-state CRDT (transmitting only changes). Traditionally, delta-state CRDTs require separate manual implementations; we derive both from a single \Lang{} specification.

%% file: related.tex
We discuss the relationship of our approach to domain-specific incrementalization frameworks,
to the incremental $\lambda$-calculus, and to other works on containers and data-type-generic incrementalization.
We close by discussing the relationship of incrementalization to streaming.

\paragraph{Domain-specific incrementalizations}

DBSP~\cite{Budiu23} is a language for incremental stream computations.
They expect types to form commutative groups, whereas we use change structures, which are more general.
For instance, we use $\mathbb{N}$, which is not a group, in our CRDT example. 

DBSP models updates using causal stream operators (CSOs). A CSO is a function from streams to streams, where the $n$-th element of the output only depends on the first $n$ elements of the input. While the CSO abstraction is theoretically appealing, it abstracts away implementation details crucial for efficient incrementalization, such as the structure and size of the cache required to process updates. Further, CSOs are modeled to operate on whole streams (which are represented as functions), making the memory requirements and computational cost unclear.

In our work, we represent as a function that explicitly manages a state, in the style of a Mealy machine. More precisely:
\[\deriv{\den{f}} : A' \to \ctype{\den{f}} \to B' \times \ctype{\den{f}}\]
This approach assigns each program a specific cache type $\ctype{\den{f}}$, making explicit not only the existence of caching but also the concrete structure of the cache for each operation. In this work, our incrementalization systematically transforms every operation into a concrete Mealy machine, ensuring that our formalization closely mirrors an efficient implementation.

Based on their theory, Budiu et al. implement a database query language for incremental view maintenance. 
Like our relational algebra instance, they model relations as maps to $\mathbb{Z}$. 
We also apply our approach to other use cases and provide a set of generic data transformations and combinators.

Differential Dataflow (DD)~\cite{McSherry13} is an earlier approach that, like DBSP, formalizes changes using commutative groups. 
In contrast to DBSP (and our approach), DD focusses on verifying a library of
incremental operators rather than on incrementalizing existing programs. DD also does not formalize how new operations can be added while maintaining soundness~\cite{Budiu23}.

LAGO~\cite{Shaikhha20} is a framework for incrementalizing linear algebra programs. 
It uses static analysis and matrix factorization to increase the efficiency of change representation and computation. 
In contrast, we focus on providing a mechanized formalization of incrementalization which supports a wider class of programs.
It could be interesting to try to integrate their optimizations~into~our~framework.

\paragraph{Incremental $\lambda$-calculus} A series of papers has proposed the incremental $\lambda$-calculus~\cite{Cai14} (I$\lambda$C) as a language for incrementalization involving higher-order functions.
The formalizations in those papers focus on incrementalizing $\lambda$-abstraction and function application. Specific data structures need to be defined separately and are not part of the formalization.
The original I$\lambda$C~\cite{Cai14} is a simply-typed $\lambda$-calculus where each type is equipped with a \textit{change structure}, associating a set of changes to every value.
The I$\lambda$C was extended to support caching by \citet{Giarrusso19}, using an untyped language.
Recently, a typed version of the cached I$\lambda$C without first-class functions was defined using tagless-final style by \citet{Matsuda23}, with a definition of incremental functions similar to ours, but without proofs.
Our approach and the I$\lambda$C are both incremental programming languages, but the feature sets are mostly disjoint: 
\Lang{} provides data-type-generic operations such as $\src{map}$ and $\src{reshape}$ and incrementalization combinators.
In contrast, the I$\lambda$C focuses on first-class functions, but does not provide any built-in operations in the core language. Adding new data types to the I$\lambda$C requires implementing all operations from scratch.

\paragraph{Caching and higher-order functions}
Our approach uses cache-transfer-style and supports well-typed incrementalization but no higher-order functions. Higher-order approaches either
do not support caches~\cite{Cai14}, or are untyped \citet{Giarrusso19}.
One obstruction is that incremental programs taking arguments of function type can’t be safely typed because the closure’s environment is not statically known, and therefore neither is the cache type (as noted by \citet{Matsuda23}, who, like us, implement a first-order language).
For automatic differentiation, it has been proposed to apply defunctionalization before AD, rather than to perform AD on first-class functions \cite{Smeding24}. A similar approach seems promising for incrementalization.

\paragraph{Containers}
\citet{Wagner14} also studies a notion of changes on containers, but in the context of symmetric lenses.
He uses containers to develop a library for composing lenses over data types. The difference to our work is that
lenses are about maintaining consistency between to data structures in the face of changes, rather than in efficiency gains. Additionally, Wagner's approach is aimed at constructing lenses manually, while we are interested in automatically incrementalizing programs.
\citet{Bohler24} performed a preliminary investigation on change structures and containers for incrementalization,
but provided incrementalizations a very small number of operations, all of which are cache-free.
In contrast, the current paper presents a program transformation for cached incrementalization and a modular approach
with a generic core language and domain-specific extensions.
Their work also does not feature incrementalization combinators or case studies.

\paragraph{Other approaches to incrementalization}

The \textsc{Adapton} library~\cite{Hammer14} implements incrementalization by building up a dependency graph between computations, with mutable references at the leaves. This approach is independent of data structures, and is able to compute changes lazily on demand. However, it incurs the overhead of building up a graph at runtime. More importantly, the incrementalization is all-or-nothing: If a value changes, every %
computation that depends on it needs to be fully recomputed. 
We instead aim to handle small changes on data structures efficiently.

\citet{Morihata18} describes an approach to incremental computing which is generic over inductive data structures, deriving elegant incrementalizations for functional programs.
They define data structures as algebraic data types with a list of constructors, %
while we define data structures as containers.
Values of data structures defined as algebraic data types are trees of constructor applications.
Hence, in order to access any values, a functional program has to recursively traverse the tree.
Our container-based definition separates shape and content.
This allows us to back the content by an array with constant-time access.
This is important 
to support random access well.
Further, Morihata does not provide an automatic program transformation.
It is therefore not clear how to use the approach other than to derive incrementalizations by hand.

\paragraph{Streaming}

As noted in Section~\ref{sec:fw}, there is a close correspondence between incrementalization and streaming.
Incrementalization takes a ``batch'' operation which takes entire values and turns it into an operation which iteratively
receives input changes and produces an output change each time. This notion corresponds to causal stream functions (as used by~\citet{Budiu23}) and Mealy machines (as used by us).

The Flo language~\cite{Laddad25} formalizes correctness properties for streaming programs:
Streaming progress means that a program does not block because it waits for a non-terminating stream to terminate,
and eager execution means that a program produces deterministic outputs, no matter if executed incrementally or in large batches.
In our case, progress is ensured by the definition of incremental functions as Mealy machines,
and eager execution roughly corresponds to our incrementalization correctness theorem.

The view of incrementalization-as-streaming can also be applied to work on semilattices for
distributed programming~\cite{Kuper13,Meiklejohn15,Conway12,Milano19}:
A semilattice over $A$ can also be interpreted as a homomorphism from a set of
inflationary updates to $A$. If we assume that a change on this set consists of adding a new update, then
incrementalizing the homomorphism yields a streaming program over a stream of updates.

%% file: conclusion.tex
In this paper we present a modular and generic approach to incrementalization which covers a wide range of data structures.
We present it in form of a core calculus called \Lang{}.
Our calculus is not tied to specific data structures or operations, 
but supports instantiation to any data structure we can decompose into a shape and a set of indices.
Meanwhile, our framework allows more fine-grained incrementalizations than selective reevaluation, which purely focus on the dependencies between operations, omitting the incrementalization of the operations themselves.

\Lang{} is a functional programming language
that supports a set of common generic functional operations, such as $\src{map}$ and $\src{filter}$.
Our goal is to automate incrementalization of complex computations.
Hence, our technique focuses on the composition of derivatives of the operations occurring in a program.
Our framework derives a program's incrementalization by mapping operations to their derivatives and by composing those.
The result is an incrementalized program which receives an \emph{input change} and uses it to update the output.
For many operations, however, the output change does not only depend on the input change, 
but also on the former output itself, e.g., for multiplication.
We address this issue by reusing and updating former results via caching~\cite{Giarrusso19}.

Note that our calculus ships with incrementalizations for the basic operations.
When instantiating it to a specific domain, we require the user to provide incrementalizations for the additional domain-specific operations.
We propose a set of incrementalization combinators to simplify this task.

Our work is mechanized and verified in the Lean theorem prover.
Through the use of intrinsic typing and our incrementalization combinators, 
we guarantee that all incrementalizations are correct by construction.
We codified these claims in our
\thmIncrSoundName~\ref{thm:ConsIncrementalization} and \thmCombSoundName~Theorems~\ref{thm:CorrectCombinators}.

To demonstrate the versatility of our framework, we implemented several case studies, and conducted corresponding performance 
evaluation observing significant speed-ups over reevaluation.

\paragraph{Future work}

Our approach uses a first-order language. 
As mentioned above, it would be interesting to support first-class functions in \Lang{} via defunctionalization.

\Lang{} might also be well suited to increment \textit{imperative} programs.
In particular, the approach of \citet{Kumar25} models imperative programs as functions from an input state to an output state.
This seems to fit well with \Lang{}, where every program has a type of the form $A \leadsto B$.
We could represent states consisting of multiple variables as nested tuples, and  express sequential execution of statements through sequential composition ($\src{;}$).
A further syntactic extension could come from the Arrow calculus~\cite{Lindley10}, which allows a limited form of function abstraction and application,
translating them into a combinatorial language that has striking similarities with ours.

Our current \Lang{} implementation realizes incremental execution through an interpreter. 
We could improve the performance by implementing our incrementalization as a source-to-source transformation and by adding a compiler backend.

Finally, in this work, we consider changes to impact only individual elements of a data structure.
Accordingly, we assume shapes to be static.
Insights from Morihata's~\cite{Morihata18} work on incremental algebraic data types could allow for incrementalization over data with dynamic shapes.

%% file: data.tex
We supply an artifact~\cite{Boehler26} in the form of a Virtual Box 7.2 image, containing our Lean proof mechanization and implementation. The artifact describes which definitions and theorems in the paper correspond to which parts of the code. It allows running the Lean type checker, verifying that the proof checker accepts our proof, and to run our benchmarks. We also provide instructions for writing new instances
of containers and incremental operations.

%% file: sums.tex
In Figure~\ref{fig:sumcs}, we define the change structures for containers, products, and sums.
The change structures for containers and products apply their operations elementwise; the case for sums is more involved,
featuring a custom type of sum changes.
The local change $\mathsf{cl}~x'$ updates a value $\iota_1~x$ to $\iota_1~(x \oplus x')$
and leaves it unchanged otherwise ($\mathsf{cr}$ does the same for $\iota_2$). The change $\mathsf{sl}~x$ replaces the old value
with a new value $x$. This non-local change is used for ensuring completeness, i.e. the ability to change any value into any other value: Local changes ($\mathsf{cl}$, $\mathsf{cr}$) do not allow us to go from $\iota_1~x$ to $\iota_2~y$, as they leave the constructor
unchanged.
We also add a $\mathsf{null}$ change, which is used for incrementalizing $\src{||}$.

The incrementalizations for the sum operations are given in Figure~\ref{fig:sumincr}.

\begin{figure}
\[\begin{array}{lll}
\den{\src{F}_s~A} &:=& \pos{s} \to \den{A} \\
\den{\src{F}_s~A}' &:=& \pos{s} \to \den{A}' \\
f \oplus g &:=& \lambda i.~f~i \oplus g~i \\
f \ominus g &:=& \lambda i.~f~i \ominus g~i \\
\\
\den{A \times B} &:=& \den{A} \times \den{B} \\
\den{A \times B}' &:=& \den{A}' \times \den{B}' \\
(x, y) \oplus (x', y') &:=& (x \oplus x', y \oplus y') \\
(x, y) \ominus (x', y') &:=& (x \ominus x', y \ominus y') \\

\\
\den{A + B} &:=& \den{A} + \den{B} \\
\den{A + B}' &:=& \mathsf{cl}~\den{A}' \mid \mathsf{cr}~\den{B}' \mid \mathsf{sl}~\den{A}
\mid \mathsf{sl}~\den{B} \mid \mathsf{null} \\
\iota_1~x \oplus \mathsf{cl}~x' &:=& \iota_1~(x \oplus x') \\
\iota_1~x \oplus \mathsf{cr}~y' &:=& \iota_1~x \\
\iota_2~y \oplus \mathsf{cl}~x' &:=& \iota_2~y \\
\iota_2~y \oplus \mathsf{cr}~y' &:=& \iota_2~(y \oplus y') \\
x \oplus \mathsf{sl}~y &:=& \iota_1~x \\
x \oplus \mathsf{sr}~y &:=& \iota_2~x \\
x \oplus \mathsf{null} &:=& x \\
\iota_1~x \ominus \iota_1~y &:=& \mathsf{cl}~(x \ominus y) \\
\iota_1~x \ominus \iota_2~y &:=& \mathsf{sl}~x \\
\iota_2~x \ominus \iota_1~y &:=& \mathsf{sr}~x \\
\iota_2~x \ominus \iota_2~y &:=& \mathsf{cr}~(x \ominus y) \\
\end{array}\]
\caption{Change structures.}
\label{fig:sumcs}
\end{figure}

\begin{figure}
\[\begin{array}{llll}
\ctype{\den{\src{fuse}_A}} &:=& \den{A} + \den{A} & \\
\init{\den{\src{fuse}}}~(\iota_1~x) &:=& (x, \iota_1~x) & \\
\init{\den{\src{fuse}}}~(\iota_2~x) &:=& (x, \iota_2~x) & \\
\deriv{\den{\src{fuse}}}~(\mathsf{cl}~x')~(\iota_1~x) &:=& (x', \iota~(x \oplus x')) & \\
\deriv{\den{\src{fuse}}}~(\mathsf{cr}~y')~(\iota_1~x) &:=& (x \ominus x, \iota_1~x) & \\
\deriv{\den{\src{fuse}}}~(\mathsf{cl}~x')~(\iota_2~y) &:=& (y \ominus y, \iota_2~y) & \\
\deriv{\den{\src{fuse}}}~(\mathsf{cr}~y')~(\iota_2~y) &:=& (y', \iota_2~(y \oplus y')) & \\
\deriv{\den{\src{fuse}}}~(\mathsf{sl}~x)~(\iota_1~x_0) &:=& (x \ominus x_0, \iota_1~x) & \\
\deriv{\den{\src{fuse}}}~(\mathsf{sl}~x)~(\iota_2~y_0) &:=& (x \ominus y_0, \iota_1~x) & \\
\deriv{\den{\src{fuse}}}~(\mathsf{sr}~y)~(\iota_1~x_0) &:=& (y \ominus x_0, \iota_2~y) & \\
\deriv{\den{\src{fuse}}}~(\mathsf{sr}~y)~(\iota_2~y_0) &:=& (y \ominus y_0, \iota_2~y) & \\
\deriv{\den{\src{fuse}}}~\mathsf{null}~(\iota_1~x) &:=& (x \ominus x, \iota_1~x) & \\
\deriv{\den{\src{fuse}}}~\mathsf{null}~(\iota_2~y) &:=& (y \ominus y, \iota_2~y) & \\
& \\
\ctype{\den{\src{distr}_{A,B,C}}} &:=& \den{A} \times (\den{B} + \den{C}) & \\
\init{\den{\src{distr}}}~(x, \iota_1~y) &:=& (\iota_1~(x,y), (x, \iota_1~y)) & \\
\init{\den{\src{distr}}}~(x, \iota_2~y) &:=& (\iota_2~(x,y), (x, \iota_2~y)) & \\
\deriv{\den{\src{distr}}}~(x', \mathsf{cl}~y')~(x, \iota_1~y) &:=& (\mathsf{cl}~(x', y'), (x \oplus x', \iota_1~(y \oplus y'))) & \\
\deriv{\den{\src{distr}}}~(x', \mathsf{cr}~y')~(x, \iota_1~y) &:=& (\mathsf{cl}~(x', y \ominus y), (x \oplus x', \iota_1~y)) & \\
\deriv{\den{\src{distr}}}~(x', \mathsf{cl}~y')~(x, \iota_2~y) &:=& (\mathsf{cr}~(x', y \ominus y), (x \oplus x', \iota_2~y)) & \\
\deriv{\den{\src{distr}}}~(x', \mathsf{cr}~y')~(x, \iota_2~y) &:=& (\mathsf{cr}~(x', y'), (x \oplus x', \iota_2~(y \oplus y'))) & \\
\deriv{\den{\src{distr}}}~(x', \mathsf{sl}~y)~(x, \_) &:=& (\mathsf{sl}~(x \oplus x', y), (x \oplus x', \iota_1~y)) & \\
\deriv{\den{\src{distr}}}~(x', \mathsf{sr}~y)~(x, \_) &:=& (\mathsf{sr}~(x \oplus x', y), (x \oplus x', \iota_2~y)) & \\
\deriv{\den{\src{distr}}}~(x', \mathsf{null})~(x, \iota_1~y) &:=& (\mathsf{cl}~(x', y \ominus y), (x \oplus x', \iota_1~y)) & \\
\deriv{\den{\src{distr}}}~(x', \mathsf{null})~(x, \iota_2~y) &:=& (\mathsf{cr}~(x', y \ominus y), (x \oplus x', \iota_2~y)) & \\
& \\
\ctype{\den{f~\src{||}_{A_1, A_2, B_1, B_2}~g}} &:=& \ctype{\den{f}} \times \den{B_1} + \ctype{\den{g}} \times \den{B_2} & \\
\init{\den{f~\src{||}~g}}~(\iota_1~x) &:=& (\iota_1~y, \iota_1~(c, y)) & \text{where } (y, c) := \init{f_i}~x \\
\init{\den{f~\src{||}~g}}~(\iota_2~x) &:=& (\iota_2~y, \iota_2~(c, y)) & \text{where } (y, c) := \init{g_i}~x \\
\deriv{\den{f~\src{||}~g}}~(\mathsf{cl}~x')~(\iota_1~(c, y)) &:=& (\mathsf{cl}~y', \iota_1~(c', y \oplus y')) & \text{where } (y', c') := \deriv{f_i}~x'~c \\
\deriv{\den{f~\src{||}~g}}~(\mathsf{cr}~x')~(\iota_1~(c, y)) &:=& (\mathsf{null}, \iota_1~(c, y)) & \\
\deriv{\den{f~\src{||}~g}}~(\mathsf{cl}~x')~(\iota_2~(c, y)) &:=& (\mathsf{null}, \iota_2~(c, y)) & \\
\deriv{\den{f~\src{||}~g}}~(\mathsf{cr}~x')~(\iota_2~(c, y)) &:=& (\mathsf{cr}~y', \iota_2~(c', y \oplus y')) & \text{where } (y', c') := \deriv{g_i}~x'~c \\
\deriv{\den{f~\src{||}~g}}~(\mathsf{sl}~x)~(\iota_1~(c, y_0)) &:=& (\mathsf{cl}~(y \ominus y_0), \iota_1~(c, y)) & \text{where } (y, c) := \init{f_i}~x \\
\deriv{\den{f~\src{||}~g}}~(\mathsf{sl}~x)~(\iota_2~\_) &:=& (\mathsf{sl}~y, \iota_1~(c, y)) & \text{where } (y, c) := \init{f_i}~x \\
\deriv{\den{f~\src{||}~g}}~(\mathsf{sr}~x)~(\iota_1~\_) &:=& (\mathsf{sr}~y, \iota_2~(c, y)) & \text{where } (y, c) := \init{g_i}~x \\
\deriv{\den{f~\src{||}~g}}~(\mathsf{sr}~x)~(\iota_2~(c, y_0)) &:=& (\mathsf{cr}~(y \ominus y_0), \iota_2~(c, y)) & \text{where } (y, c) := \init{g_i}~x \\
\deriv{\den{f~\src{||}~g}}~\mathsf{null}~x &:=& (\mathsf{null}, x) & \\
\end{array}\]
\caption{Incrementalization for sum operators.}
\label{fig:sumincr}
\end{figure}

%% file: evaldetails.tex
We show the library implementation and evaluation results for our case studies.
All our time measurements best-of-5.
The benchmarking machine has an AMD Ryzen 7 PRO 7840U CPU with 16 cores and 32 threads
and runs Debian 13 trixie with kernel 6.12.63. The Lean version is 4.22.0-rc4.

\subsection{Linear Algebra}
\label{sec:laeval}

For the linear algebra evaluation, we benchmark both matrix-vector multiplication and a dense layer (matrix-vector multiplication with bias and ReLU activation).
We apply changes affecting 1\% of the input vector's elements.
The results are plotted in Fig.~\ref{fig:laplotdense} and Fig.~\ref{fig:laplot}.

\begin{figure}
\begin{subfigure}{0.48\linewidth}
\includegraphics[width=\linewidth]{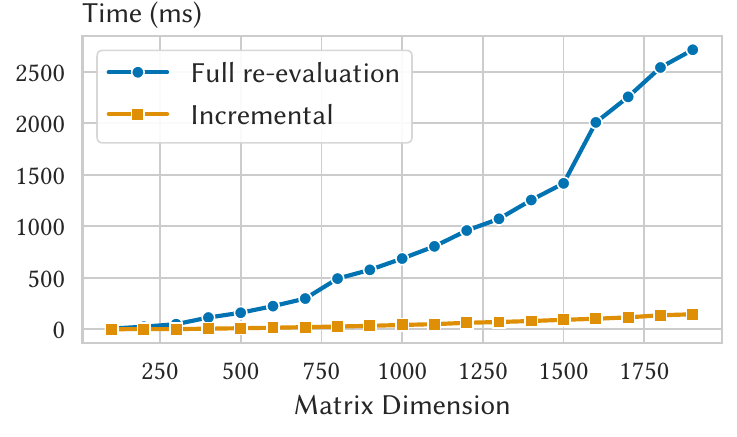}
\caption{Execution time vs. matrix dimension (linear scale).}
\label{fig:laplot-linear}
\end{subfigure}
\hfill
\begin{subfigure}{0.48\linewidth}
\includegraphics[width=\linewidth]{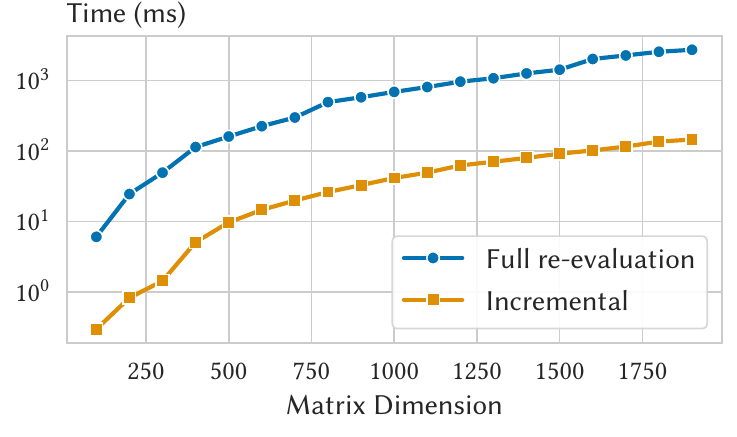}
\caption{Execution time vs. matrix dimension (logarithmic scale).}
\label{fig:laplot-log}
\end{subfigure}
\caption{Matrix-vector multiplication benchmark execution time vs. matrix dimension.}
\label{fig:laplot}
\end{figure}

\begin{figure}
\begin{subfigure}{0.48\linewidth}
\includegraphics[width=\linewidth]{figures/la_benchmark_2_time_vs_size.pdf}
\caption{Execution time vs. matrix dimension (linear scale).}
\label{fig:laplotdense-linear}
\end{subfigure}
\hfill
\begin{subfigure}{0.48\linewidth}
\includegraphics[width=\linewidth]{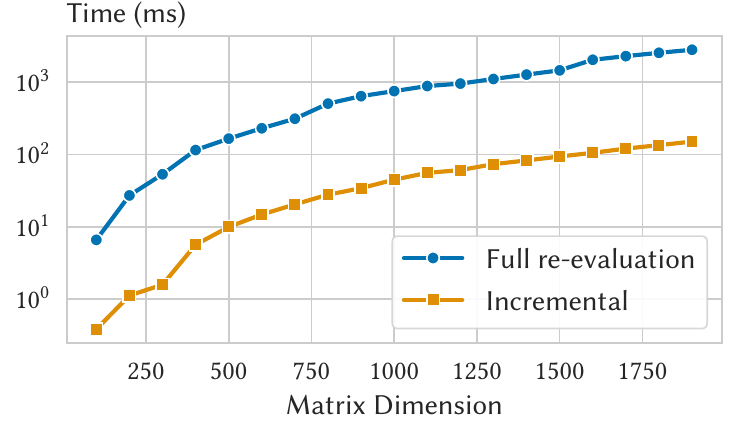}
\caption{Execution time vs. matrix dimension (logarithmic scale).}
\label{fig:laplotdense-log}
\end{subfigure}
\caption{Dense layer (matrix-vector multiplication with bias and ReLU activation) benchmark execution time vs. matrix dimension.}
\label{fig:laplotdense}
\end{figure}

To understand how the sparsity of the input affects performance, we also benchmark matrix-vector multiplication with varying sparsity levels (percentage of non-zero elements) on fixed $1000 \times 1000$ matrices.
The results are plotted in Fig.~\ref{fig:laplotsparsity}.

\begin{figure}
\centering
\includegraphics[width=0.6\linewidth]{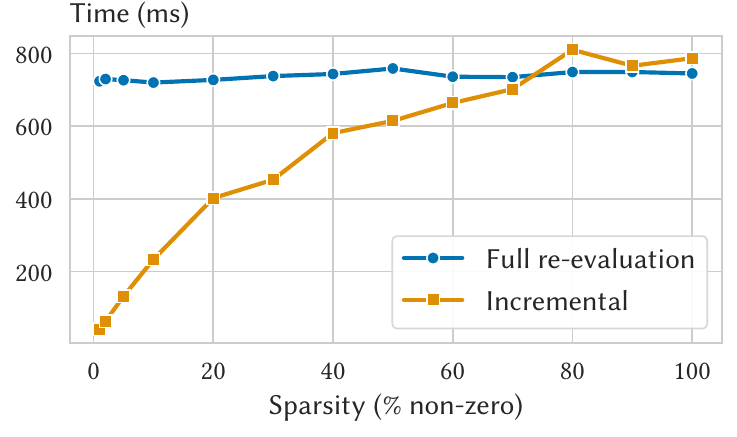}
\caption{Matrix-vector multiplication benchmark execution time vs. input sparsity (percentage of non-zero elements), with fixed matrix dimension $1000 \times 1000$.}
\label{fig:laplotsparsity}
\end{figure}

\subsection{Relational Algebra}
\label{sec:raeval}

We evaluate two relational algebra operations: an aggregation operation (projection) and a non-aggregation operation (join, then filter).

\paragraph{Aggregation}

In Fig.~\ref{fig:groupby}, we show the library implementation
for our relational algebra aggregation case study (note that \lstinline{Z} is
the relation type constructor). For both \lstinline{count} and \lstinline{group},
we write a direct implementation in Lean and incrementalizing it with the \lstinline{Incr.self}
combinator. We then define a derived operation \lstinline{proj}, which
projects the input relation on an input function \lstinline{f}.

The results are plotted in Fig.~\ref{fig:raplotagg}. The change size is 1\% of the input size.

\begin{figure}
\begin{lstlisting}
def countD (m : Z α) : Int := m.fold (λ c _ v => c + v) 0
def count : Incr (Z α) Int := .self countD countD

def groupByD (f : α → β) (m : Z α) : D β (Z α) := Id.run do
  let mut res := ∅
  for (k, v) in m do
    res := res.alter (f k) (λ
      | .some m => m.add k v
      | .none => some {(k, v)})
  return res

def groupBy (f : α → β) : Incr (Z α) (D β (Z α)) :=
  Incr.self (groupByD f) (groupByD f)

def proj (f : α → β) : Incr (D α Int) (D β Int) :=
  groupBy f ;; Incr.map count
\end{lstlisting}
\caption{Implementation of \lstinline{count}, \lstinline{groupBy} and \lstinline{proj}.}
\label{fig:groupby}
\end{figure}

\begin{figure}
\begin{subfigure}{0.48\linewidth}
\includegraphics[width=\linewidth]{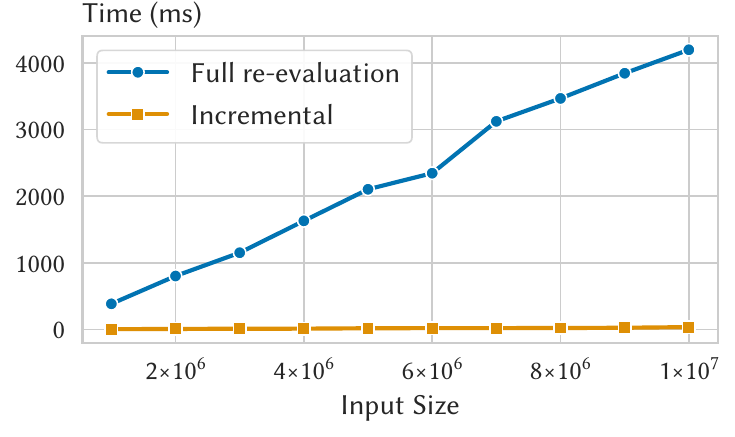}
\caption{Execution time vs. input size (linear scale).}
\label{fig:raplotagg-linear}
\end{subfigure}
\hfill
\begin{subfigure}{0.48\linewidth}
\includegraphics[width=\linewidth]{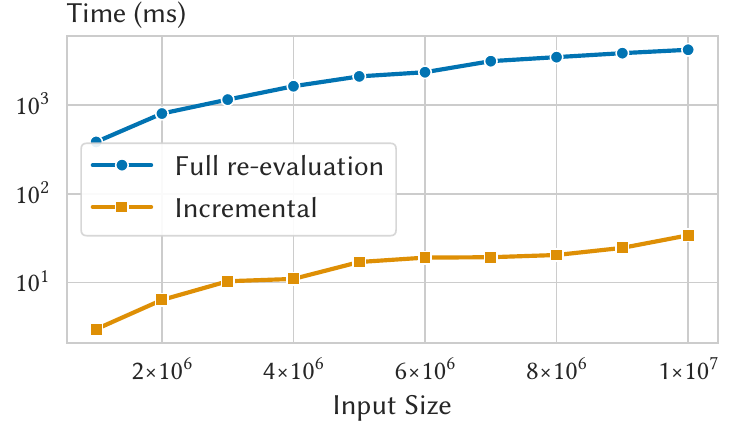}
\caption{Execution time vs. input size (logarithmic scale).}
\label{fig:raplotagg-log}
\end{subfigure}
\caption{Relational Algebra aggregation (projection) benchmark execution time vs. input size.}
\label{fig:raplotagg}
\end{figure}

\paragraph{Join}

For non-aggregation operations, we benchmark
the join $ R \bowtie_p S := \ext{\otimes}\src{; filter~}p $.
The results are plotted in Fig.~\ref{fig:raplotjoin}. The change size is 1\% of the input size.

\begin{figure}
\begin{subfigure}{0.48\linewidth}
\includegraphics[width=\linewidth]{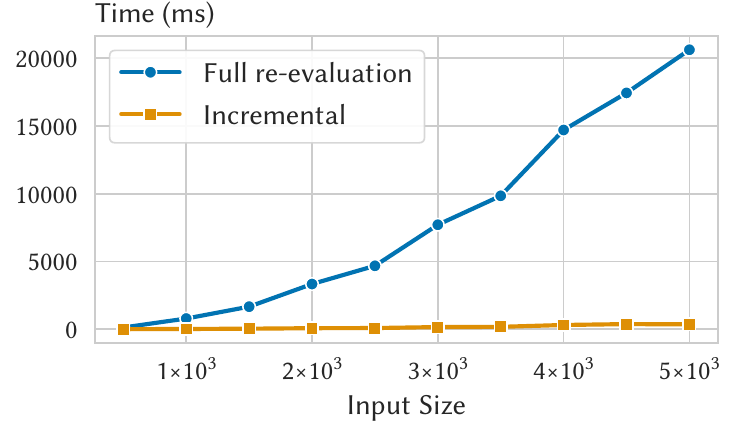}
\caption{Execution time vs. input size (linear scale).}
\label{fig:raplotjoin-linear}
\end{subfigure}
\hfill
\begin{subfigure}{0.48\linewidth}
\includegraphics[width=\linewidth]{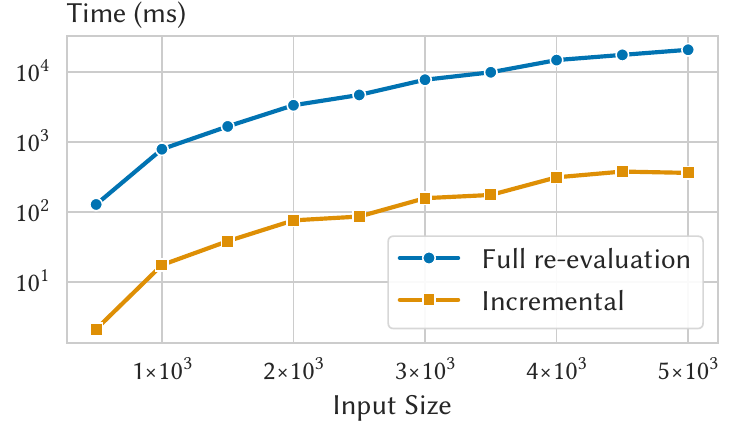}
\caption{Execution time vs. input size (logarithmic scale).}
\label{fig:raplotjoin-log}
\end{subfigure}
\caption{Relational Algebra join benchmark execution time vs. input size.}
\label{fig:raplotjoin}
\end{figure}

\subsection{JSON}
\label{sec:jsoneval}

In Fig.~\ref{fig:data} we show the example data used in our JSON case study and in Fig.~\ref{fig:q1} we show the library implementation for the Q1 query.
The program first removes all books that are not published by Addison-Wesley, then removes all books not published after 1991, and finally projects the result on the \lstinline{year} and \lstinline{title} fields.
The helper function \lstinline{checkPath} takes a path and a predicate and returns an incremental function which deletes JSON values that do not satisfy the predicate at the given path.

\begin{figure}
\begin{lstlisting}
structure Person where
  first : String
  last : String
  affiliation : Option String := none

structure Book where
  year : Nat
  title : String
  authors : List Person := []
  editors : List Person := []
  publisher : String
  price : Float

def bibliography : Array Book := #[
  { year := 1994,
    title := "TCP/IP Illustrated",
    authors := [{ first := "W.", last := "Stevens" }],
    publisher := "Addison-Wesley",
    price := 65.95 },
  { year := 1992,
    title := "Advanced Programming in the Unix environment",
    authors := [{ first := "W.", last := "Stevens" }],
    publisher := "Addison-Wesley",
    price := 65.95 },
  { year := 2000,
    title := "Data on the Web",
    authors := [
      { first := "Serge", last := "Abiteboul" },
      { first := "Peter", last := "Buneman" },
      { first := "Dan", last := "Suciu" }
    ],
    publisher := "Morgan Kaufmann Publishers",
    price := 39.95 },
  { year := 1999,
    title := "The Economics of Technology and Content for Digital TV",
    editors := [{ first := "Darcy", last := "Gerbarg", affiliation := "CITI" }],
    publisher := "Kluwer Academic Publishers",
    price := 129.95 }
]
\end{lstlisting}
\caption{Test data for Q1 from \url{https://www.w3.org/TR/xquery-use-cases/}.
Lean can automatically derive JSON serialization and deserialization ; we further convert the JSON to our container representation (both omitted here).}
\label{fig:data}
\end{figure}

\begin{figure}
\begin{lstlisting}
def checkPath (p : JsonPath) (q : JsonScalar → Bool) : Incr Json Json :=
  .triv λ v => if q (v.get p) then v else ∅

def q1I : Incr (D Nat Json) (D Nat Json) :=
  let f1 : Incr (D Nat Json) (D Nat Json) :=
    .map (checkPath [.dict "publisher"] λ v =>
      v == (.str "Addison-Wesley"))
  let f2 : Incr (D Nat Json) (D Nat Json) :=
    .map (checkPath [.dict "year"] λ
      | .num n => n > 1991
      | _ => false)
  f1 ;; f2 ;; .map (.filter λ k => k == [.dict "year"] || k == [.dict "title"])
\end{lstlisting}
\caption{Incremental implementation of Q1. ``\lstinline{D}'' is a dictionary type,
``\lstinline{Json}'' is the JSON container.}
\label{fig:q1}
\end{figure}

\subsection{Rose Trees}
\label{sec:roseeval}

For the rose trees evaluation, we generate trees with branching factor 2,
and define an incremental sum over the tree.
The change size is 1\% of the tree size.
The results are plotted in Fig.~\ref{fig:roseplot}.

\begin{figure}
\begin{subfigure}{0.48\linewidth}
\includegraphics[width=\linewidth]{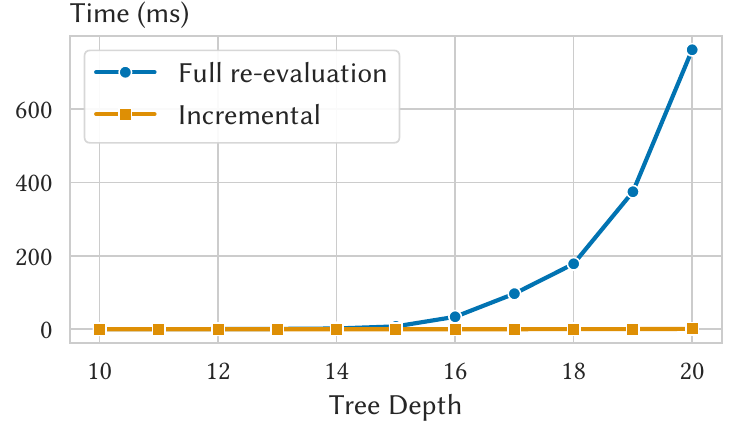}
\caption{Execution time vs. tree depth (linear scale).}
\label{fig:roseplot-linear}
\end{subfigure}
\hfill
\begin{subfigure}{0.48\linewidth}
\includegraphics[width=\linewidth]{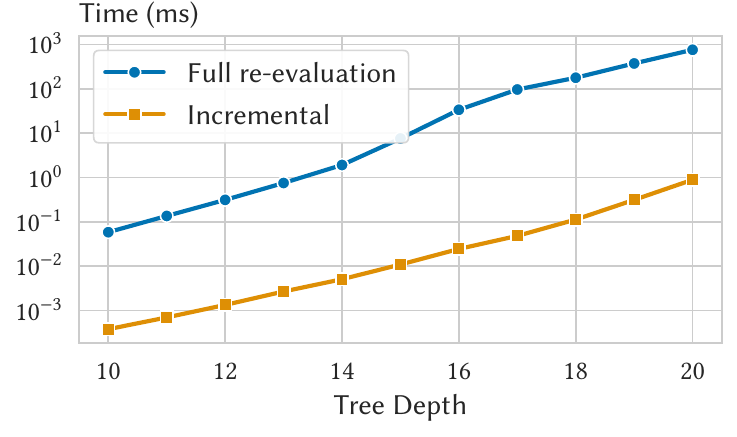}
\caption{Execution time vs. tree depth (logarithmic scale).}
\label{fig:roseplot-log}
\end{subfigure}
\caption{Rose tree benchmark execution time vs. tree depth.}
\label{fig:roseplot}
\end{figure}

%% file: bib.bib
@inproceedings{Cai14,
  author       = {Yufei Cai and
                  Paolo G. Giarrusso and
                  Tillmann Rendel and
                  Klaus Ostermann},
  editor       = {Michael F. P. O'Boyle and
                  Keshav Pingali},
  title        = {A theory of changes for higher-order languages: incrementalizing {\(\lambda\)}-calculi
                  by static differentiation},
  booktitle    = {{ACM} {SIGPLAN} Conference on Programming Language Design and Implementation,
                  {PLDI} '14, Edinburgh, United Kingdom - June 09 - 11, 2014},
  pages        = {145--155},
  publisher    = {{ACM}},
  year         = {2014},
  url          = {https://doi.org/10.1145/2594291.2594304},
  doi          = {10.1145/2594291.2594304},
  bibsource    = {dblp computer science bibliography, https://dblp.org}
}

@article{Budiu23,
  author       = {Mihai Budiu and
                  Tej Chajed and
                  Frank McSherry and
                  Leonid Ryzhyk and
                  Val Tannen},
  title        = {{DBSP:} Automatic Incremental View Maintenance for Rich Query Languages},
  journal      = {Proc. {VLDB} Endow.},
  volume       = {16},
  number       = {7},
  pages        = {1601--1614},
  year         = {2023},
  url          = {https://www.vldb.org/pvldb/vol16/p1601-budiu.pdf},
  doi          = {10.14778/3587136.3587137},
  bibsource    = {dblp computer science bibliography, https://dblp.org}
}

@article{Abbott05,
  author       = {Michael Gordon Abbott and
                  Thorsten Altenkirch and
                  Neil Ghani},
  title        = {Containers: Constructing strictly positive types},
  journal      = {Theor. Comput. Sci.},
  volume       = {342},
  number       = {1},
  pages        = {3--27},
  year         = {2005},
  url          = {https://doi.org/10.1016/j.tcs.2005.06.002},
  doi          = {10.1016/J.TCS.2005.06.002},
  bibsource    = {dblp computer science bibliography, https://dblp.org}
}

@inproceedings{Giarrusso19,
  author       = {Paolo G. Giarrusso and
                  Yann R{\'{e}}gis{-}Gianas and
                  Philipp Schuster},
  editor       = {Lu{\'{\i}}s Caires},
  title        = {Incremental $\lambda$-Calculus in Cache-Transfer Style
                  - Static Memoization by Program Transformation},
  booktitle    = {Programming Languages and Systems - 28th European Symposium on Programming,
                  {ESOP} 2019, Held as Part of the European Joint Conferences on Theory
                  and Practice of Software, {ETAPS} 2019, Prague, Czech Republic, April
                  6-11, 2019, Proceedings},
  series       = {Lecture Notes in Computer Science},
  volume       = {11423},
  pages        = {553--580},
  publisher    = {Springer},
  year         = {2019},
  url          = {https://doi.org/10.1007/978-3-030-17184-1\_20},
  doi          = {10.1007/978-3-030-17184-1\_20},
  bibsource    = {dblp computer science bibliography, https://dblp.org}
}

@article{Morihata18,
  author       = {Akimasa Morihata},
  title        = {Incremental computing with data structures},
  journal      = {Sci. Comput. Program.},
  volume       = {164},
  pages        = {18--36},
  year         = {2018},
  url          = {https://doi.org/10.1016/j.scico.2017.04.001},
  doi          = {10.1016/J.SCICO.2017.04.001},
  bibsource    = {dblp computer science bibliography, https://dblp.org}
}

@article{Shaikhha20,
  author       = {Amir Shaikhha and
                  Mohammed Elseidy and
                  Stephan Mihaila and
                  Daniel Espino and
                  Christoph Koch},
  title        = {Synthesis of Incremental Linear Algebra Programs},
  journal      = {{ACM} Trans. Database Syst.},
  volume       = {45},
  number       = {3},
  pages        = {12:1--12:44},
  year         = {2020},
  url          = {https://doi.org/10.1145/3385398},
  doi          = {10.1145/3385398},
  bibsource    = {dblp computer science bibliography, https://dblp.org}
}

@inproceedings{Bohler24,
  author       = {Timon B{\"{o}}hler and
                  David Richter and
                  Mira Mezini},
  editor       = {Luca Di Stefano},
  title        = {Incrementalizing Polynomial Functors},
  booktitle    = {Proceedings of the 26th {ACM} International Workshop on Formal Techniques
                  for Java-like Programs, FTfJP 2024, Vienna, Austria, 20 September
                  2024},
  pages        = {44--49},
  publisher    = {{ACM}},
  year         = {2024},
  url          = {https://doi.org/10.1145/3678721.3686231},
  doi          = {10.1145/3678721.3686231},
  bibsource    = {dblp computer science bibliography, https://dblp.org}
}

@article{Shaikhha22,
  author       = {Amir Shaikhha and
                  Mathieu Huot and
                  Jaclyn Smith and
                  Dan Olteanu},
  title        = {Functional collection programming with semi-ring dictionaries},
  journal      = {Proc. {ACM} Program. Lang.},
  volume       = {6},
  number       = {{OOPSLA1}},
  pages        = {1--33},
  year         = {2022},
  url          = {https://doi.org/10.1145/3527333},
  doi          = {10.1145/3527333},
  bibsource    = {dblp computer science bibliography, https://dblp.org}
}

@inproceedings{Hammer14,
  author       = {Matthew A. Hammer and
                  Yit Phang Khoo and
                  Michael Hicks and
                  Jeffrey S. Foster},
  editor       = {Michael F. P. O'Boyle and
                  Keshav Pingali},
  title        = {Adapton: composable, demand-driven incremental computation},
  booktitle    = {{ACM} {SIGPLAN} Conference on Programming Language Design and Implementation,
                  {PLDI} '14, Edinburgh, United Kingdom - June 09 - 11, 2014},
  pages        = {156--166},
  publisher    = {{ACM}},
  year         = {2014},
  url          = {https://doi.org/10.1145/2594291.2594324},
  doi          = {10.1145/2594291.2594324},
  bibsource    = {dblp computer science bibliography, https://dblp.org}
}

@article{Elliott17,
  author       = {Conal Elliott},
  title        = {Compiling to categories},
  journal      = {Proc. {ACM} Program. Lang.},
  volume       = {1},
  number       = {{ICFP}},
  pages        = {27:1--27:27},
  year         = {2017},
  url          = {https://doi.org/10.1145/3110271},
  doi          = {10.1145/3110271},
  bibsource    = {dblp computer science bibliography, https://dblp.org}
}

@article{Carette09,
  author       = {Jacques Carette and
                  Oleg Kiselyov and
                  Chung{-}chieh Shan},
  title        = {Finally tagless, partially evaluated: Tagless staged interpreters
                  for simpler typed languages},
  journal      = {J. Funct. Program.},
  volume       = {19},
  number       = {5},
  pages        = {509--543},
  year         = {2009},
  url          = {https://doi.org/10.1017/S0956796809007205},
  doi          = {10.1017/S0956796809007205},
  bibsource    = {dblp computer science bibliography, https://dblp.org}
}

@article{Liu00,
  author       = {Yanhong A. Liu},
  title        = {Efficiency by Incrementalization: An Introduction},
  journal      = {High. Order Symb. Comput.},
  volume       = {13},
  number       = {4},
  pages        = {289--313},
  year         = {2000},
  url          = {https://doi.org/10.1023/A:1026547031739},
  doi          = {10.1023/A:1026547031739},
  bibsource    = {dblp computer science bibliography, https://dblp.org}
}

@article{Patrignani20,
  title={Why Should Anyone use Colours? or, Syntax Highlighting Beyond Code Snippets},
  author={Marco Patrignani},
  journal={ArXiv},
  year={2020},
  volume={abs/2001.11334},
}

@article{Matsuda23,
  author       = {Kazutaka Matsuda and
                  Samantha Frohlich and
                  Meng Wang and
                  Nicolas Wu},
  title        = {Embedding by Unembedding},
  journal      = {Proc. {ACM} Program. Lang.},
  volume       = {7},
  number       = {{ICFP}},
  pages        = {1--47},
  year         = {2023},
  url          = {https://doi.org/10.1145/3607830},
  doi          = {10.1145/3607830},
  bibsource    = {dblp computer science bibliography, https://dblp.org}
}

@article{Almeida25,
  author       = {Paulo S{\'{e}}rgio Almeida},
  title        = {Approaches to Conflict-free Replicated Data Types},
  journal      = {{ACM} Comput. Surv.},
  volume       = {57},
  number       = {2},
  pages        = {51:1--51:36},
  year         = {2025},
  url          = {https://doi.org/10.1145/3695249},
  doi          = {10.1145/3695249},
  bibsource    = {dblp computer science bibliography, https://dblp.org}
}

@inproceedings{Shapiro11,
  author       = {Marc Shapiro and
                  Nuno M. Pregui{\c{c}}a and
                  Carlos Baquero and
                  Marek Zawirski},
  editor       = {Xavier D{\'{e}}fago and
                  Franck Petit and
                  Vincent Villain},
  title        = {Conflict-Free Replicated Data Types},
  booktitle    = {Stabilization, Safety, and Security of Distributed Systems - 13th
                  International Symposium, {SSS} 2011, Grenoble, France, October 10-12,
                  2011. Proceedings},
  series       = {Lecture Notes in Computer Science},
  volume       = {6976},
  pages        = {386--400},
  publisher    = {Springer},
  year         = {2011},
  url          = {https://doi.org/10.1007/978-3-642-24550-3\_29},
  doi          = {10.1007/978-3-642-24550-3\_29},
  bibsource    = {dblp computer science bibliography, https://dblp.org}
}

@article{Almeida18,
  author       = {Paulo S{\'{e}}rgio Almeida and
                  Ali Shoker and
                  Carlos Baquero},
  title        = {Delta state replicated data types},
  journal      = {J. Parallel Distributed Comput.},
  volume       = {111},
  pages        = {162--173},
  year         = {2018},
  url          = {https://doi.org/10.1016/j.jpdc.2017.08.003},
  doi          = {10.1016/J.JPDC.2017.08.003},
  bibsource    = {dblp computer science bibliography, https://dblp.org}
}

@article{Codd70,
  author       = {E. F. Codd},
  title        = {A Relational Model of Data for Large Shared Data Banks},
  journal      = {Commun. {ACM}},
  volume       = {13},
  number       = {6},
  pages        = {377--387},
  year         = {1970},
  url          = {https://doi.org/10.1145/362384.362685},
  doi          = {10.1145/362384.362685},
  bibsource    = {dblp computer science bibliography, https://dblp.org}
}

@article{Fox76,
  title={Coalgebras and cartesian categories},
  author={Thomas A. O. Fox},
  journal={Communications in Algebra},
  year={1976},
  volume={4},
  pages={665-667},
}

@inproceedings{Kumar25,
  author       = {Prashant Kumar and
                  Andr{\'{e}} Pacak and
                  Sebastian Erdweg},
  editor       = {Jonathan Aldrich and
                  Alexandra Silva},
  title        = {Incremental Computing by Differential Execution},
  booktitle    = {39th European Conference on Object-Oriented Programming, {ECOOP} 2025,
                  June 30 to July 2, 2025, Bergen, Norway},
  series       = {LIPIcs},
  volume       = {333},
  pages        = {20:1--20:24},
  publisher    = {Schloss Dagstuhl - Leibniz-Zentrum f{\"{u}}r Informatik},
  year         = {2025},
  url          = {https://doi.org/10.4230/LIPIcs.ECOOP.2025.20},
  doi          = {10.4230/LIPICS.ECOOP.2025.20},
  bibsource    = {dblp computer science bibliography, https://dblp.org}
}

@inproceedings{Reinking20,
author = {Reinking, Alex and Xie, Ningning and de Moura, Leonardo and Leijen, Daan},
title = {Perceus: garbage free reference counting with reuse},
year = {2021},
isbn = {9781450383912},
publisher = {Association for Computing Machinery},
address = {New York, NY, USA},
url = {https://doi.org/10.1145/3453483.3454032},
doi = {10.1145/3453483.3454032},
booktitle = {Proceedings of the 42nd ACM SIGPLAN International Conference on Programming Language Design and Implementation},
pages = {96–111},
numpages = {16},
location = {Virtual, Canada},
series = {PLDI 2021}
}

@book{Wagner14,
  title={Symmetric edit lenses: a new foundation for bidirectional languages},
  author={Wagner, Daniel},
  year={2014},
  publisher={University of Pennsylvania}
}

@inproceedings{McSherry13,
  author       = {Frank McSherry and
                  Derek Gordon Murray and
                  Rebecca Isaacs and
                  Michael Isard},
  title        = {Differential Dataflow},
  booktitle    = {Sixth Biennial Conference on Innovative Data Systems Research, {CIDR}
                  2013, Asilomar, CA, USA, January 6-9, 2013, Online Proceedings},
  publisher    = {www.cidrdb.org},
  year         = {2013},
  url          = {http://cidrdb.org/cidr2013/Papers/CIDR13\_Paper111.pdf},
  bibsource    = {dblp computer science bibliography, https://dblp.org}
}

@misc{Ghica25,
      title={A Complete Theory of Sequential Digital Circuits: Denotational, Operational and Algebraic Semantics}, 
      author={Dan R. Ghica and George Kaye and David Sprunger},
      year={2025},
      eprint={2201.10456},
      archivePrefix={arXiv},
      primaryClass={cs.LO},
      url={https://arxiv.org/abs/2201.10456}, 
}

@inproceedings{Lean4,
	author       = {Leonardo de Moura and
	Sebastian Ullrich},
	editor       = {Andr{\'{e}} Platzer and
	Geoff Sutcliffe},
	title        = {The Lean 4 Theorem Prover and Programming Language},
	booktitle    = {Automated Deduction - {CADE} 28 - 28th International Conference on
	Automated Deduction, Virtual Event, July 12-15, 2021, Proceedings},
	series       = {Lecture Notes in Computer Science},
	volume       = {12699},
	pages        = {625--635},
	publisher    = {Springer},
	year         = {2021},
	url          = {https://doi.org/10.1007/978-3-030-79876-5\_37},
	doi          = {10.1007/978-3-030-79876-5\_37},
	bibsource    = {dblp computer science bibliography, https://dblp.org}
}

@book{Okasaki99,
  author       = {Chris Okasaki},
  title        = {Purely functional data structures},
  publisher    = {Cambridge University Press},
  year         = {1999},
  isbn         = {978-0-521-66350-2},
  bibsource    = {dblp computer science bibliography, https://dblp.org}
}

@article{Lindley10,
  author       = {Sam Lindley and
                  Philip Wadler and
                  Jeremy Yallop},
  title        = {The arrow calculus},
  journal      = {J. Funct. Program.},
  volume       = {20},
  number       = {1},
  pages        = {51--69},
  year         = {2010},
  url          = {https://doi.org/10.1017/S095679680999027X},
  doi          = {10.1017/S095679680999027X},
  bibsource    = {dblp computer science bibliography, https://dblp.org}
}

@article{Laddad25,
  author       = {Shadaj Laddad and
                  Alvin Cheung and
                  Joseph M. Hellerstein and
                  Mae Milano},
  title        = {Flo: {A} Semantic Foundation for Progressive Stream Processing},
  journal      = {Proc. {ACM} Program. Lang.},
  volume       = {9},
  number       = {{POPL}},
  pages        = {241--270},
  year         = {2025},
  url          = {https://doi.org/10.1145/3704845},
  doi          = {10.1145/3704845},
  bibsource    = {dblp computer science bibliography, https://dblp.org}
}

@inproceedings{Meiklejohn15,
  author       = {Christopher Meiklejohn and
                  Peter Van Roy},
  editor       = {Moreno Falaschi and
                  Elvira Albert},
  title        = {Lasp: a language for distributed, coordination-free programming},
  booktitle    = {Proceedings of the 17th International Symposium on Principles and
                  Practice of Declarative Programming, Siena, Italy, July 14-16, 2015},
  pages        = {184--195},
  publisher    = {{ACM}},
  year         = {2015},
  url          = {https://doi.org/10.1145/2790449.2790525},
  doi          = {10.1145/2790449.2790525},
  bibsource    = {dblp computer science bibliography, https://dblp.org}
}

@inproceedings{Conway12,
  author       = {Neil Conway and
                  William R. Marczak and
                  Peter Alvaro and
                  Joseph M. Hellerstein and
                  David Maier},
  editor       = {Michael J. Carey and
                  Steven Hand},
  title        = {Logic and lattices for distributed programming},
  booktitle    = {{ACM} Symposium on Cloud Computing, {SOCC} '12, San Jose, CA, USA,
                  October 14-17, 2012},
  pages        = {1},
  publisher    = {{ACM}},
  year         = {2012},
  url          = {https://doi.org/10.1145/2391229.2391230},
  doi          = {10.1145/2391229.2391230},
  bibsource    = {dblp computer science bibliography, https://dblp.org}
}

@inproceedings{Milano19,
  author       = {Mae Milano and
                  Rolph Recto and
                  Tom Magrino and
                  Andrew C. Myers},
  editor       = {Benjamin S. Lerner and
                  Rastislav Bod{\'{\i}}k and
                  Shriram Krishnamurthi},
  title        = {A Tour of Gallifrey, a Language for Geodistributed Programming},
  booktitle    = {3rd Summit on Advances in Programming Languages, {SNAPL} 2019, Providence,
                  RI, USA, May 16-17, 2019},
  series       = {LIPIcs},
  volume       = {136},
  pages        = {11:1--11:19},
  publisher    = {Schloss Dagstuhl - Leibniz-Zentrum f{\"{u}}r Informatik},
  year         = {2019},
  url          = {https://doi.org/10.4230/LIPIcs.SNAPL.2019.11},
  doi          = {10.4230/LIPICS.SNAPL.2019.11},
  bibsource    = {dblp computer science bibliography, https://dblp.org}
}

@inproceedings{Kuper13,
author = {Kuper, Lindsey and Newton, Ryan R.},
title = {LVars: lattice-based data structures for deterministic parallelism},
year = {2013},
isbn = {9781450323819},
publisher = {Association for Computing Machinery},
address = {New York, NY, USA},
url = {https://doi.org/10.1145/2502323.2502326},
doi = {10.1145/2502323.2502326},
booktitle = {Proceedings of the 2nd ACM SIGPLAN Workshop on Functional High-Performance Computing},
pages = {71–84},
numpages = {14},
location = {Boston, Massachusetts, USA},
series = {FHPC '13}
}

@article{Smeding24,
  author       = {Tom Smeding and
                  Matthijs V{\'{a}}k{\'{a}}r},
  title        = {Efficient {CHAD}},
  journal      = {Proc. {ACM} Program. Lang.},
  volume       = {8},
  number       = {{POPL}},
  pages        = {1060--1088},
  year         = {2024},
  url          = {https://doi.org/10.1145/3632878},
  doi          = {10.1145/3632878},
  bibsource    = {dblp computer science bibliography, https://dblp.org}
}

@software{Boehler26,
  author       = {Böhler, Timon and
                  Reinhard, Tobias and
                  Richter, David and
                  Mezini, Mira},
  title        = {DeCo: A Core Calculus for Incremental Functional
                   Programming with Generic Data Types
                  },
  month        = feb,
  year         = 2026,
  publisher    = {Zenodo},
  doi          = {10.5281/zenodo.18757667},
  url          = {https://doi.org/10.5281/zenodo.18757667},
}
